\newtheorem{definition}{Definition}
\newtheorem{theorem}{Theorem}
\newtheorem{problem}{Problem}
\newtheorem{proposition}{Proposition}
\newtheorem{lemma}{Lemma}
\newtheorem{remark}{Remark}
\newcommand\recht\operatorname
\title{
\fontsize{14pt}{16pt}\selectfont\textbf{Data Sanitisation Protocols for the Privacy Funnel with Differential Privacy Guarantees}}
\author{
\IEEEauthorblockN{Milan Lopuhaä-Zwakenberg, Haochen Tong, and Boris \v{S}kori\'{c}}
\IEEEauthorblockA{Department of Mathematics and Computer Science\\
Eindhoven University of Technology\\
Eindhoven, the Netherlands\\
email: \{m.a.lopuhaa,b.skoric\}@tue.nl, h.tong@student.tue.nl}
}
\begin{document}

\IEEEtitleabstractindextext{%
    \begin{abstract}
In the Open Data approach, governments and other public organisations want to share their datasets with the public, for accountability and to support participation. Data must be opened in such a way that individual privacy is safeguarded. The Privacy Funnel is a mathematical approach that produces a sanitised database that does not leak private data beyond a chosen threshold. The downsides to this approach are that it does not give worst-case privacy guarantees, and that finding optimal sanitisation protocols can be computationally prohibitive. We tackle these problems by using differential privacy metrics, and by considering local protocols which operate on one entry at a time. We show that under both the Local Differential Privacy and Local Information Privacy leakage metrics, one can efficiently obtain optimal protocols. Furthermore, Local Information Privacy is both  more closely aligned to the privacy requirements of the Privacy Funnel scenario, and more efficiently computable. We also consider the scenario where each user has multiple attributes, for which we define \emph{Side-channel Resistant Local Information Privacy}, and we give efficient methods to find protocols satisfying this criterion while still offering good utility. Finally, we introduce \emph{Conditional Reporting}, an explicit LIP protocol that can be used when the optimal protocol is infeasible to compute, and we test this protocol on real-world and synthetic data. Experiments on real-world and synthetic data confirm the validity of these methods.
    \end{abstract}
}

\maketitle

\thispagestyle{plain}
\pagestyle{plain}

\IEEEdisplaynontitleabstractindextext

\textbf{\textit{Keywords---Privacy funnel; local differential privacy; information privacy; database sanitisation; complexity.}}

\section{Introduction}

This paper is an extended version of \cite{lopuhaa2020privacy}. Under the Open Data paradigm, governments and other public organisations want to share their collected data with the general public. This increases a government's transparency, and it also gives citizens and businesses the means to participate in decision-making, as well as using the data for their own purposes. However, while the released data should be as faithful to the raw data as possible, individual citizens' private data should not be compromised by such data publication.

Let $\mathcal{X}$ be a finite set. Consider a database $\vec{X} = (X_1,\ldots,X_n) \in \mathcal{X}^n$ owned by a data aggregator, containing a data item $X_i \in \mathcal{X}$ for each user $i$ (For typical database settings, each user's data is a vector of attributes $X_i = (X_i^1,\ldots,X_i^m)$; we will consider this in more detail in Section \ref{sec:mult}). This data may not be considered sensitive by itself, but it might be correlated to a secret $S_i$. 
For instance, $X_i$ might contain the age, sex, weight, skin colour, and average blood pressure of person $i$, while $S_i$ is the presence of some medical condition.
To publish the data without leaking the $S_i$, the aggregator releases a privatised database $\vec{Y} = (Y_1,\ldots,Y_n)$, obtained from applying a sanitisation mechanism $\mathcal{R}$ to $\vec{X}$. One way to formulate this is by considering the \emph{Privacy Funnel}:

\begin{problem} \label{prob:pf} \emph{(Privacy Funnel, \cite{calmon2017principal})}
Suppose the joint probability distribution of $\vec{S}$ and $\vec{X}$ is known to the aggregator, and let $M \in \mathbb{R}_{\geq 0}$. Then, find the sanitisation mechanism $\mathcal{R}$ such that $\recht{I}(\vec{X};\vec{Y})$ is maximised while $\recht{I}(\vec{S};\vec{Y}) \leq M$.
\end{problem}

There are two difficulties with this approach:
\begin{enumerate}
\item Finding and implementing good privatization mechanisms that operate on all of $\vec{X}$ can be computationally prohibitive for large $n$, as the complexity is exponential in $n$ \cite{ding2019submodularity}\cite{prasser2014arx}.
\item Taking mutual information as a leakage measure has as a disadvantage that it gives guarantees about the leakage in the average case. If $n$ is large, this still leaves room for the sanitisation protocol to leak undesirably much information about a few unlucky users.
\end{enumerate}

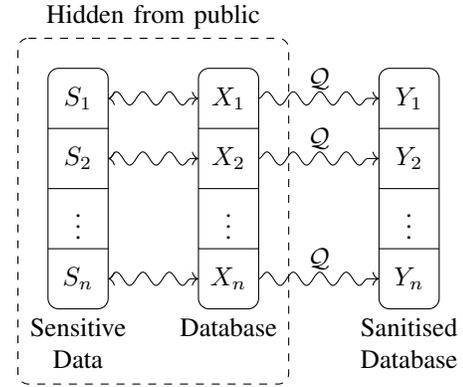
\begin{figure}
\centering
\begin{tikzpicture}[scale = 0.4, every text node part/.style={align=center}]
\draw[rounded corners] (1,1) -- (3,1) -- (3,-7) --node[below]{Sensitive \\ Data} (1,-7) -- cycle;
\draw[-] (1,-1)--(3,-1);
\draw[-] (1,-3)--(3,-3);
\draw[-] (1,-5)--(3,-5);
\draw (2,0) node{$S_1$};
\draw (2,-2) node{$S_2$};
\draw (2,-4) node{$\vdots$};
\draw (2,-6) node{$S_n$};
\draw[rounded corners] (6,1) -- (8,1) -- (8,-7) --node[below]{Database} (6,-7) -- cycle;
\draw[-] (6,-1)--(8,-1);
\draw[-] (6,-3)--(8,-3);
\draw[-] (6,-5)--(8,-5);
\draw (7,0) node{$X_1$};
\draw (7,-2) node{$X_2$};
\draw (7,-4) node{$\vdots$};
\draw (7,-6) node{$X_n$};
\draw[rounded corners] (12,1) -- (14,1) -- (14,-7) --node[below, text width = 1.4cm]{Sanitised Database} (12,-7) -- cycle;
\draw[-] (12,-1)--(14,-1);
\draw[-] (12,-3)--(14,-3);
\draw[-] (12,-5)--(14,-5);
\draw (13,0) node{$Y_1$};
\draw (13,-2) node{$Y_2$};
\draw (13,-4) node{$\vdots$};
\draw (13,-6) node{$Y_n$};
\draw[->,decorate,decoration={snake}](8,0) -- node[above]{$\mathcal{Q}$} (12,0);
\draw[->,decorate,decoration={snake}](8,-2) -- node[above]{$\mathcal{Q}$} (12,-2);
\draw[->,decorate,decoration={snake}](8,-6) -- node[above]{$\mathcal{Q}$} (12,-6);
\draw[<->,decorate,decoration={snake}] (3,0) -- (6,0);
\draw[<->,decorate,decoration={snake}] (3,-2) -- (6,-2);
\draw[<->,decorate,decoration={snake}] (3,-6) -- (6,-6);
\draw[rounded corners,dashed] (0,2) --node[above]{Hidden from public}  (9,2) -- (9,-9.5) -- (0,-9.5) -- cycle;
\end{tikzpicture}
\captionsetup{font={footnotesize,rm},justification=centering,labelsep=period}
\caption{\it Model of the Privacy Funnel with local protocols.} 
\label{fig:sit}
\end{figure}

To deal with these two difficulties, we make two changes to the general approach. First, we look at \emph{local} data sanitisation, i.e., we consider optimization protocols $\mathcal{Q}\colon \mathcal{X} \rightarrow \mathcal{Y}$, for some finite set $\mathcal{Y}$, and we apply $\mathcal{Q}$ to each $X_i$ individually; this situation is depicted in Figure \ref{fig:sit}. 
Local sanitisation can be implemented efficiently.
In fact, this approach is often taken in the Privacy Funnel setting \cite{makhdoumi2014information}\cite{ding2019submodularity}. Second, to ensure strong privacy guarantees even in worst-case scenarios, we take stricter notions of privacy, based on Local Differential Privacy (LDP) \cite{kasiviswanathan2011can}. For these metrics, we develop methods to find optimal protocols. Furthermore, for situations where the optimal protocol is computationally unfeasible to find, we introduce a new protocol, {\em Conditional Reporting} (CR), that takes advantage of the fact that only $S_i$ needs to be protected. Determining CR only requires finding the root of a onedimensional increasing function, which can be done fast numerically.

\subsection{New contributions}

In this paper, we adapt two Differential Privacy-like privacy metrics to the Privacy Funnel situation, namely Local Differential Privacy (LDP) \cite{kasiviswanathan2011can} and Local Information Privacy (LIP) \cite{jiang2019local}\cite{salamatian2020privacy}. 
We modify these metrics so that they measure leakage about the underlying $S$ rather than $X$ itself (for notational convenience, we write $S,X,Y$ rather than $S_i,X_i,Y_i$ throughout the rest of this paper). For a given level of leakage, we are interested in the privacy protocol that maximises the mutual information between input $X_i$ and output $Y_i$. 
Adapting methods from \cite{kairouz2014extremal} on LDP and \cite{rassouli2018perfect} on perfect privacy, we prove the following Theorem:

\begin{theorem}[Theorems \ref{thm:ldp} and \ref{thm:lip} paraphrased]
Suppose $a = \#\mathcal{X},c = \#\mathcal{S}$, and $\recht{p}_{X,S}$, as well as a privacy level $\varepsilon \geq 0$ are given.

\begin{enumerate}
    \item  The optimal $\varepsilon$-LDP protocol can be found by enumerating the vertices of a polytope in $a^2-a$ dimensions defined by $a(c^2-c)$ inequalities.
    \item The optimal $\varepsilon$-LIP protocol can be found by enumerating the vertices of a polytope in $a-1$ dimensions defined by $2ac$ inequalities.
\end{enumerate}
\end{theorem}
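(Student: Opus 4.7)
The proof strategy hinges on two standard observations: first, that $\recht{I}(X;Y)$ is a convex function of the channel $Q$ once $\recht{p}_X$ is held fixed, so its maximum over any convex polytope of admissible channels is attained at a vertex; and second, that both LDP and LIP, once pushed through the Markov chain $S\to X\to Y$, can be written as finitely many linear inequalities in $Q$. The plan is to formulate each feasible set as a polytope in an explicit coordinate system, bound the size of the output alphabet $\mathcal{Y}$ so the ambient dimension is finite, and then read off the numbers of dimensions and defining inequalities.

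For the LDP claim the privacy constraint $\recht{p}_{Y|S}(y|s)\le e^\varepsilon \recht{p}_{Y|S}(y|s')$ expands, using $\recht{p}_{Y|S}(y|s)=\sum_x Q_{x,y}\,\recht{p}_{X|S}(x|s)$, into $\sum_x\bigl(\recht{p}_{X|S}(x|s)-e^\varepsilon\recht{p}_{X|S}(x|s')\bigr)Q_{x,y}\le 0$, which is linear in $Q$. Adapting the extremal-mechanism argument of Kairouz--Oh--Viswanath, I would show that two output symbols whose columns are proportional can be merged without altering $\recht{I}(X;Y)$ or the LDP condition, so we may restrict to $\#\mathcal{Y}=a$. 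The channel then has $a^2$ entries subject to $a$ row-sum equalities, giving the $a^2-a$ affine dimensions; the LDP conditions contribute one inequality per triple $(y,s,s')$ with $s\neq s'$, amounting to $a\cdot c(c-1)=a(c^2-c)$.

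For the LIP claim the cleaner parametrisation is by the posteriors $r_y(x):=\recht{p}_{X|Y}(x|y)$. Bayes' rule turns $e^{-\varepsilon}\recht{p}_Y(y)\le\recht{p}_{Y|S}(y|s)\le e^\varepsilon\recht{p}_Y(y)$ into $e^{-\varepsilon}\recht{p}_S(s)\le\sum_x\recht{p}_{S|X}(s|x)\,r_y(x)\le e^\varepsilon\recht{p}_S(s)$, so every $r_y$ lies in a common polytope $P$ of dimension $a-1$ inside $\Delta^{a-1}$. Using $\recht{I}(X;Y)=\recht{H}(X)-\sum_y\recht{p}_Y(y)\recht{H}(r_y)$ together with the consistency $\recht{p}_X=\sum_y\recht{p}_Y(y)\,r_y$, I would invoke concavity of $\recht{H}$ and the perfect-privacy argument of Rassouli--G\"und\"uz to conclude that each $r_y$ in an optimum sits at a vertex of $P$, with $\#\mathcal{Y}\le a$ by Carath\'eodory. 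Collecting the defining half-spaces---the $2c$ two-sided LIP inequalities applied across the at-most-$a$ output columns---then yields the claimed $2ac$ count.

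The main obstacle is neither the convexity step nor the linear-programming bookkeeping, but rather proving the two tight alphabet bounds: for LDP the merging of proportional columns, and for LIP the Carath\'eodory-style reduction to posteriors at vertices of $P$. These are where the special structure of each privacy metric is genuinely used; once they are in hand, the dimensions and inequality counts follow directly from the respective parametrisations.
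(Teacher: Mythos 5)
Your proposal is correct and follows essentially the same route as the paper: the LDP part uses the forward-channel polytope with the Kairouz--Oh--Viswanath argument for $\#\mathcal{Y}=a$ and convexity of $\recht{I}(X;Y)$ in $Q$, and the LIP part uses the posterior parametrisation $R_{x|y}$, concavity of entropy, and the Rassouli--G\"und\"uz perfect-privacy argument with a Carath\'eodory bound $b\le a$, exactly as in Theorems \ref{thm:ldp} and \ref{thm:lip}. The only (immaterial) divergence is bookkeeping of the inequality count for $\Delta$, where the paper's own complexity discussion uses $a+2c$ per output column rather than the aggregated $2ac$ of the paraphrase.
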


The descriptions of these polytopes, and how they relate to the optimisation problem, are discussed in Sections \ref{sec:ldp} and \ref{sec:lip}, respectively. 
Since the complexity of the polytope vertex enumeration depends significantly on both its dimension and the number of defining inequalities \cite{barany2000polytopes}, finding optimal LIP protocols can be done significantly faster than finding optimal LDP protocols. 
Furthermore, we will argue that LIP is a privacy metric that more accurately captures information leakage than LDP in the Privacy Funnel scenario. For these two reasons we only consider LIP in the remainder of the paper, although many results can also be formulated for LDP.

A common scenario is that a user's data $X$ consists of multiple attributes, i.e. $X = (X^1,\ldots,X^m)$. Here one can consider an attacker model where the attacker has access to some of the $X^j$. In this situation $\varepsilon$-LIP does not accurately reflect a user's privacy. Because of this, we introduce a new privacy metric called \emph{Sidechannel-Resistant LIP} that takes such sidechannels into account. We expand the vertex enumeration methods outlined above to find optimal SRLIP methods in Section \ref{sec:mult}.

Finding the optimal protocols can become computationally unfeasible for large $a$ and $c$. In such a situation, one needs to resort to explicitely given protocols. In the literature there is a wealth of protocols that satisfy $\varepsilon$-LDP w.r.t. $X$. These certainly work in our situation, but they might not be ideal, because these are designed to obfuscate all information about $X$, rather than just the part that relates to $S$. For this reason, we introduce Conditional Reporting (CR), a privacy protocol that focuses on hiding $S$ rather than $X$, in Section \ref{sec:cr}. Finding the appropriate CR protocol for a given probability distribution and privacy level can be done fast numerically.

In Section \ref{sec:exp}, we test the methods and protocols discussed above on both synthetic and real data. 
Compared to \cite{lopuhaa2020privacy},
new content in this extended paper are Section \ref{sec:cr}, the experiments on real data, and the extended literature review.

\subsection{Related work}

The Privacy Funnel (PF) setting was introduced in \cite{makhdoumi2014information}, to provide a framework for obfuscating data in such a way that the obfuscated data remains as faithful as possible to the original, while ensuring that the information leakage about a latent variable is limited. The Privacy Funnel is related to the Information Bottleneck (IB) \cite{tishby2000information}, a problem from machine learning that seeks to compress data as much as possible, while retaining a minimal threshold of information about a latent variable. 
In PF as well as IB, both utility and leakage are measured via mutual information. Many approaches to finding the optimal protocols in PF also work for IB and vice versa \cite{kung2018compressive}\cite{ding2019submodularity}. A wider range of privacy metrics for the Privacy Funnel, and their relation to Differential Privacy, is discussed in \cite{salamatian2020privacy}.

Local Differential Privacy (LDP) was introduced in \cite{kasiviswanathan2011can}. 
It is an adaptation of Differential Privacy \cite{dwork2006calibrating} to a setting where there is no trusted central party to obfuscate the data. 
As a privacy metric, it has the advantage that it offers a privacy guarantee in any case, not just the average case, and that it does not depend on the data distribution. On the downside, it can be difficult to fulfill such a stringent definition of privacy, and many relaxations of (Local) Differential Privacy have been proposed \cite{cuff2016differential}\cite{dwork2006our}\cite{dwork2016concentrated}\cite{mironov2017renyi}. 
We are particularly interested in Local Information Privacy (LIP) \cite{jiang2019local}\cite{salamatian2020privacy}, also called Removal Local Differential Privacy \cite{erlingsson2020encode}. 
LIP retains the worst-case guarantees of LDP, but is less restrictive, and can 
take advantage of a known distribution. In the context where only part of the data is considered secret, many privacy metrics fall under the umbrella of Pufferfish Privacy \cite{kifer2014pufferfish}.

In \cite{kairouz2014extremal}, a method
was introduced for finding optimal LDP-protocols for a wide variety of utility metrics, including mutual information. 
The method relies on finding the vertices of a polytope, but since this is the well-studied Differential Privacy polytope, its vertices can be described explicitly \cite{holohan2017extreme}. 
Similarly, \cite{rassouli2018perfect} uses a vertex enumeration method to find the optimal protocol in the perfect privacy situation, i.e. when the released data is independent of the secret data. The complexity of vertex enumeration is discussed in \cite{avis1992pivoting}\cite{barany2000polytopes}.

\section{Mathematical Setting} \label{sec:maths}

The database $\vec{X} = (X_1,\ldots,X_n)$ consists of a data item $X_i$ for each user $i$, each an element of a given finite set $\mathcal{X}$. Furthermore, each user has sensitive data $S_i \in \mathcal{S}$, which is correlated with $X_i$; again we assume $\mathcal{S}$ to be finite (see Figure \ref{fig:sit}). 
We assume that each $(S_i,X_i)$ is drawn independently from the same distribution $\recht{p}_{S,X}$ on $\mathcal{S} \times \mathcal{X}$ which is known to the aggregator through observing $(\vec{S},\vec{X})$ (if one allows for non-independent $X_i$, then differential privacy is no longer an adequate privacy metric \cite{cuff2016differential}{}\cite{salamatian2020privacy}). The aggregator, who has access to $\vec{X}$, sanitises the database by applying a sanitisation protocol (i.e., a random function) $\mathcal{Q}\colon \mathcal{X} \rightarrow\mathcal{Y}$ to each $X_i$, outputting $\vec{Y} = (Y_1,\ldots,Y_n) = (\mathcal{Q}(X_1),\ldots,\mathcal{Q}(X_n))$. The aggregator's goal is to find a $\mathcal{Q}$ that maximises the information about $X_i$ preserved in $Y_i$ (measured as $\recht{I}(X_i;Y_i)$) while leaking only minimal information about $S_i$.

Without loss of generality we write $\mathcal{X} = \{1,\ldots,a\}$, $\mathcal{Y} = \{1,\ldots,b\}$ and $\mathcal{S} = \{1,\ldots,c\}$ for integers $a,b,c$. We omit the subscript $i$ from $X_i$, $Y_i$, $S_i$ as no probabilities depend on it, and we write such probabilities as $\recht{p}_x$, $\recht{p}_s$, $\recht{p}_{x|s}$, etc., which form vectors $\recht{p}_{X}$, $\recht{p}_{S|x}$, etc., and matrices $\recht{p}_{X|S}$, etc.

As noted before, instead of looking at the mutual information $\recht{I}(S;Y)$, we consider two different, related measures of sensitive information leakage known from the literature. The first one is an adaptation of LDP, the \emph{de facto} standard in information privacy \cite{kasiviswanathan2011can}:

\begin{definition} \emph{($\varepsilon$-LDP)}
Let $\varepsilon \in \mathbb{R}_{\geq 0}$. We say that $\mathcal{Q}$ satisfies $\varepsilon$-LDP w.r.t. $S$ if 
\begin{equation}
\forall_{y\in{\cal Y}}\forall_{s,s'\in\cal S} \quad
\frac{\mathbb{P}(Y = y | S = s)}{\mathbb{P}(Y = y  | S = s')} \leq \textrm{\emph{e}}^{\varepsilon}.
\end{equation}
\end{definition}

Most literature on LDP considers LDP w.r.t. $X$, i.e. $\frac{\mathbb{P}(Y = y | X = x)}{\mathbb{P}(Y = y  | X = x')} \leq \textrm{e}^{\varepsilon}$ for all $x,x',y$. Throughout this paper, by $\varepsilon$-LDP we always mean $\varepsilon$-LDP w.r.t. $S$, unless otherwise specified.

The LDP metric reflects the fact that we are only interested in hiding sensitive data, rather than all data; it is a specific case of what has been named `pufferfish privacy' \cite{kifer2014pufferfish}. The advantage of LDP compared to mutual information is that it gives privacy guarantees for the worst case, not just the average case. This is desirable in the database setting, as a worst-case metric guarantees the security of the private data of all users, while average-case metrics are only concerned with the average user. Another useful privacy metric is  \emph{Local Information Privacy} (LIP) \cite{jiang2019local}\cite{salamatian2020privacy}, also called Removal Local Differential Privacy \cite{erlingsson2020encode}:

\begin{definition} \emph{($\varepsilon$-LIP)}
Let $\varepsilon \in \mathbb{R}_{\geq 0}$. We say that $\mathcal{Q}$ satisfies $\varepsilon$-LIP w.r.t. $S$ if 
\begin{equation}
\forall_{y\in{\cal Y},s\in{\cal S}}\quad\quad
\textrm{\emph{e}}^{-\varepsilon} \leq \frac{\mathbb{P}(Y = y | S = s)}{\mathbb{P}(Y = y)} \leq \textrm{\emph{e}}^{\varepsilon}.
\end{equation}
\end{definition}

Compared to LDP, the disadvantage of LIP is that it depends on the distribution of $S$; 
this is not a problem in our scenario, as the aggregator, who chooses $\mathcal{Q}$, has access to the distribution of $S$. 
The advantage of LIP is that is more closely related to an attacker's capabilities: since $\frac{\mathbb{P}(Y = y | S = s)}{\mathbb{P}(Y = y)} = \frac{\mathbb{P}(S = s | Y = y)}{\mathbb{P}(S = s)}$, satisfying $\varepsilon$-LIP means that an attacker's posterior distribution of $S$ given $Y = y$ does not deviate from their prior distribution by more than a factor $\textrm{e}^{\varepsilon}$. The following lemma outlines the relations between LDP, LIP and mutual information (see Figure \ref{fig:rel}).

\begin{lemma} \emph{(See \cite{salamatian2020privacy})}
Let $\mathcal{Q}$ be a sanitisation protocol, and let $\varepsilon \in \mathbb{R}_{\geq 0}$.
\begin{enumerate}
\item If $\mathcal{Q}$ satisfies $\varepsilon$-LDP, then it satisfies $\varepsilon$-LIP.
\item If $\mathcal{Q}$ satisfies $\varepsilon$-LIP, then it satisfies $2\varepsilon$-LDP, and $\recht{I}(S;Y) \leq \varepsilon$.
\end{enumerate}
\end{lemma}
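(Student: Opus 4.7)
The plan is to prove the three claims one at a time, each following directly from the definitions.

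For part 1 (LDP $\Rightarrow$ LIP), I would write $\mathbb{P}(Y=y)=\sum_{s'} \mathbb{P}(Y=y\mid S=s')\mathbb{P}(S=s')$ as a convex combination of conditionals, and then apply the LDP inequalities coordinatewise. Concretely, $\varepsilon$-LDP gives $\textrm{e}^{-\varepsilon}\mathbb{P}(Y=y\mid S=s)\leq\mathbb{P}(Y=y\mid S=s')\leq\textrm{e}^{\varepsilon}\mathbb{P}(Y=y\mid S=s)$ for every $s'$, so weighting by $\mathbb{P}(S=s')$ and summing yields $\textrm{e}^{-\varepsilon}\mathbb{P}(Y=y\mid S=s)\leq\mathbb{P}(Y=y)\leq\textrm{e}^{\varepsilon}\mathbb{P}(Y=y\mid S=s)$, which is exactly $\varepsilon$-LIP after dividing.

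For part 2 (LIP $\Rightarrow$ $2\varepsilon$-LDP), I would use the simple identity
\begin{equation*}
\frac{\mathbb{P}(Y=y\mid S=s)}{\mathbb{P}(Y=y\mid S=s')}=\frac{\mathbb{P}(Y=y\mid S=s)/\mathbb{P}(Y=y)}{\mathbb{P}(Y=y\mid S=s')/\mathbb{P}(Y=y)},
\end{equation*}
where the numerator is at most $\textrm{e}^{\varepsilon}$ and the denominator at least $\textrm{e}^{-\varepsilon}$ by the LIP assumption, giving a ratio of at most $\textrm{e}^{2\varepsilon}$.

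For part 3 (LIP $\Rightarrow \recht{I}(S;Y)\leq\varepsilon$), I would rewrite mutual information as $\recht{I}(S;Y)=\sum_{s,y}\mathbb{P}(S=s,Y=y)\log\frac{\mathbb{P}(Y=y\mid S=s)}{\mathbb{P}(Y=y)}$, bound the log by $\varepsilon$ using the upper LIP inequality, and use $\sum_{s,y}\mathbb{P}(S=s,Y=y)=1$. None of these steps is technically deep; the main thing to be careful about is the direction of the inequalities in part 1, where one needs both the upper and lower LDP bounds to bracket the marginal on both sides, and to ensure consistency of the log base with the $\textrm{e}^{\varepsilon}$ convention in part 3 (so that $\log \textrm{e}^\varepsilon=\varepsilon$ measured in nats).
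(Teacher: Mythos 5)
Your proof is correct. Note that the paper itself does not prove this lemma at all --- it simply cites \cite{salamatian2020privacy} --- so there is no in-paper argument to compare against; your three steps (bracketing the marginal $\mathbb{P}(Y=y)$ as a convex combination of the conditionals, cancelling $\mathbb{P}(Y=y)$ in the LDP ratio, and bounding the log-ratio inside the mutual-information sum by $\varepsilon$ in nats) are exactly the standard arguments, and you correctly use the fact that the paper's one-sided LDP inequality, holding for all ordered pairs $(s,s')$, yields the two-sided bracket needed in part 1.
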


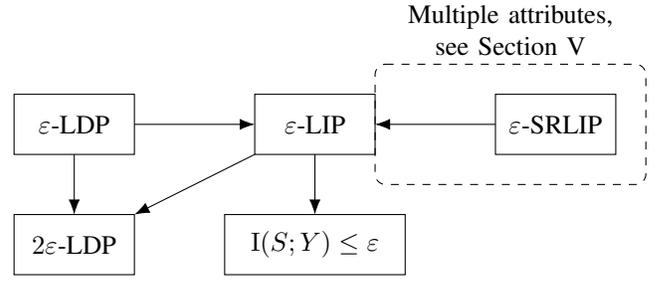
\begin{figure}
\centering
\begin{tikzpicture}[scale = 0.4, every text node part/.style={align=center}]
\draw[-] (0,0) -- (4,0) -- (4,-2) -- (0,-2) -- cycle;
\draw (2,-1) node{$\varepsilon$-LDP};
\draw[-] (0,-4) -- (4,-4) -- (4,-6) -- (0,-6) -- cycle;
\draw (2,-5) node{$2\varepsilon$-LDP};
\draw[-] (8,0) -- (12,0) -- (12,-2) -- (8,-2) -- cycle;
\draw (10,-1) node{$\varepsilon$-LIP};
\draw[-] (7,-4) -- (13,-4) -- (13,-6) -- (7,-6) -- cycle;
\draw (10,-5) node{$\recht{I}(S;Y) \leq \varepsilon$};
\draw[-] (16,0) -- (20,0) -- (20,-2) -- (16,-2) -- cycle;
\draw (18,-1) node{$\varepsilon$-SRLIP};
\draw[-{Latex[length=2mm]}] (4,-1) -- (8,-1);
\draw[-{Latex[length=2mm]}] (2,-2) -- (2,-4);
\draw[-{Latex[length=2mm]}] (8,-2) -- (4,-4);
\draw[-{Latex[length=2mm]}] (10,-2) -- (10,-4);
\draw[-{Latex[length=2mm]}] (16,-1) -- (12,-1);
\draw[-,dashed,rounded corners] (12,1) --node[above]{Multiple attributes, \\ see Section \ref{sec:mult}} (21,1) -- (21,-3) -- (12,-3) -- cycle;
\end{tikzpicture}
\captionsetup{font={footnotesize,rm},justification=centering,labelsep=period}
\caption{\it Relations between privacy notions. The multiple attributes setting is discussed in Section \ref{sec:mult}.} 
\label{fig:rel}
\end{figure}

\begin{remark} \label{rem:prior}
One gets robust equivalents of LDP and LIP by demanding that $\mathcal{Q}$ satisfy $\varepsilon$-LIP ($\varepsilon$-LDP) for a 
\underline{set} of distributions $\recht{p}_{S,X}$, instead of only 
a single distribution \cite{kifer2014pufferfish}. 
Letting $\recht{p}_{S,X}$ range over all possible distributions on $\mathcal{S} \times \mathcal{X}$ yields LIP (LDP) w.r.t. $X$.
\end{remark}

In this notation, instead of Problem \ref{prob:pf} we consider the following problem:

\begin{problem} \label{prob:pf2}
Suppose $\recht{p}_{S,X}$ is known to the aggregator, and let $\varepsilon \in \mathbb{R}_{\geq 0}$. Then, find the sanitisation protocol $\mathcal{Q}$ such that $\recht{I}(X;Y)$ is maximised while $\mathcal{Q}$ satisfies $\varepsilon$-LDP ($\varepsilon$-LIP, respectively) with respect to $S$.
\end{problem}

Note that this problem does not depend on the number of users $n$, and as such this approach will find solutions that are scalable w.r.t. $n$.

\section{Optimizing $\mathcal{Q}$ for $\varepsilon$-LDP} \label{sec:ldp}

Our goal is now to find the optimal $\mathcal{Q}$, i.e., the protocol that maximises $\recht{I}(X;Y)$ while satisfying $\varepsilon$-LDP, for a given $\varepsilon$. We can represent any sanitisation protocol as a matrix $Q \in \mathbb{R}^{b \times a}$, where $Q_{y|x} = \mathbb{P}(Y = y | X = x)$. 
Then, $\varepsilon$-LDP is satisfied if and only if
\begin{align}
\forall x\colon& \ \sum_y Q_{y|x} = 1,\\
\forall x,y\colon& \ 0 \leq Q_{y|x},\\
\forall s,s',y\colon& \ (Q\recht{p}_{X|s})_y \leq \textrm{e}^{\varepsilon} (Q\recht{p}_{X|s'})_y. \label{eq:ldp}
\end{align}
As such, for a given $\mathcal{Y}$, the set of $\varepsilon$-LDP-satisfying sanitisation protocols can be considered a closed, bounded, convex polytope $\Gamma$ in $\mathbb{R}^{b \times a}$. This fact allows us to efficiently find optimal protocols.

\begin{theorem} \label{thm:ldp}
Let $\varepsilon \in \mathbb{R}_{\geq 0}$. Let $\mathcal{Q}\colon \mathcal{X} \rightarrow \mathcal{Y}$ be a $\varepsilon$-LDP protocol that maximises $\recht{I}(X;Y)$, i.e., the protocol that solves Problem \ref{prob:pf2} w.r.t. LDP.
\begin{enumerate}
\item One can take $b = a$.
\item Let $\Gamma$ be the polytope described above, for $b = a$. Then the optimal $\mathcal{Q}$ corresponds to one of the vertices of $\Gamma$.
\end{enumerate}
\end{theorem}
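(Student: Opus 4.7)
My plan is to treat the two parts separately: part (2) follows from a classical convexity-at-a-vertex principle, while part (1) is a reduction showing that output alphabets of size greater than $a$ are redundant.

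For part (2) I would first verify that $\Gamma$ is a non-empty, compact, convex polytope. The defining conditions (3)--(5) are all affine-linear in the entries of $Q$, so $\Gamma$ is cut out of the normalisation plane $\{\sum_y Q_{y|x}=1 \ \forall x\}$ by finitely many closed half-spaces, and the bounds $0 \leq Q_{y|x} \leq 1$ give boundedness. I would then invoke the standard information-theoretic fact that $\recht{I}(X;Y)$, viewed as a function of the channel matrix $Q$ with the input marginal $\recht{p}_X$ held fixed, is a \emph{convex} function of $Q$ (Cover--Thomas, Thm.~2.7.4). Since any convex function on a non-empty compact convex polytope attains its maximum at an extreme point, there is an optimiser at a vertex of $\Gamma$, giving (2).

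For part (1), I would show that any $\varepsilon$-LDP protocol $\mathcal{Q}\colon \mathcal{X} \to \mathcal{Y}$ with $b = |\mathcal{Y}| > a$ can be replaced by one with at most $a$ outputs without decreasing $\recht{I}(X;Y)$; iterating brings $b$ down to $a$, after which we pad with zero rows to reach exactly $b=a$. The core step is a rank-plus-perturbation argument: if $b > a$ then the $b$ row vectors $Q_{y|\cdot} \in \mathbb{R}^{a}$ are linearly dependent, and this dependence produces a line-segment in $\Gamma_b$ along which every column sum $\sum_y Q_{y|x}$ is preserved, so $Q(t)$ remains stochastic. Along this segment $\recht{I}(X;Y_t)$ is convex in $t$, hence its maximum is attained at an endpoint; by choosing the perturbation direction carefully one arranges that at one endpoint an entire row of $Q$ vanishes, and that output can then be dropped.

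The part I expect to be the main obstacle is (1): an LDP inequality may become tight before a row is driven to zero, which blocks the naive row-removal. Handling this requires either (a) an auxiliary argument that an LDP-saturated configuration with $b > a$ admits an equivalent $\varepsilon$-LDP representation on a strictly smaller output alphabet (for instance by re-basing the column span of $Q$), or (b) an inductive argument counting active non-negativity versus LDP constraints at extremal vertices of $\Gamma_b$ to force the existence of a zero row when $b>a$. Part (2), by contrast, is essentially immediate once convexity of the channel-to-mutual-information map is invoked.
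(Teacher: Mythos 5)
Your part (2) is exactly the paper's argument: the paper likewise observes that $\recht{I}(X;Y)$ is convex in $Q$ for fixed $\recht{p}_X$ and that a convex function on the bounded polytope $\Gamma$ attains its maximum at a vertex; nothing more is said there, so your verification that $\Gamma$ is a non-empty compact convex polytope is a routine but welcome addition. For part (1) the paper gives no argument at all beyond citing \cite{kairouz2014extremal} and asserting that the proof there (for LDP w.r.t.\ $X$) carries over verbatim to LDP w.r.t.\ $S$; your self-contained reduction is therefore a genuinely different route, and it does work --- but the obstacle you flag is not actually there, provided you commit to the \emph{multiplicative} row-scaling perturbation rather than a generic column-sum-preserving one. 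Concretely: if $b>a$ and every row of $Q$ is nonzero (a zero row can be deleted outright), pick a dependence $\sum_y c_y Q_{y|\cdot}=0$ with not all $c_y=0$; since the rows are entrywise nonnegative and nonzero, the nonzero $c_y$ must take both signs. Set $Q(t)_{y|x}=(1+tc_y)Q_{y|x}$. Column sums are preserved because $\sum_y c_yQ_{y|x}=0$; each LDP inequality $(Q\recht{p}_{X|s})_y\le \textrm{e}^{\varepsilon}(Q\recht{p}_{X|s'})_y$ is preserved because both sides acquire the \emph{same} nonnegative factor $(1+tc_y)$; and nonnegativity holds exactly on the bounded interval where all $1+tc_y\ge 0$, whose two endpoints each annihilate an entire row. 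Since $Q(t)$ is affine in $t$, the map $t\mapsto\recht{I}(X;Y_t)$ is convex, so one endpoint does at least as well as $t=0$, and you delete the vanished output there. Iterating gives $b\le a$, and padding with zero rows gives $b=a$. So neither of your fallback strategies (a) nor (b) is needed; with this one observation your proof of (1) is complete, and arguably more informative than the paper's bare citation.
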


\begin{proof}
The first result is obtained by generalising the results of \cite{kairouz2014extremal}: there this is proven for regular $\varepsilon$-LDP (i.e., w.r.t. $X$), but the arguments given in that proof hold just as well in our situation; the only difference is that their polytope is defined by the $\varepsilon$-LDP conditions w.r.t. $X$, but this has no impact on the proof. The second statement follows from the fact that $\recht{I}(X;Y)$ is a convex function in $\mathcal{Q}$; therefore its maximum on a bounded polytope is attained in one of the vertices.
\end{proof}

This theorem reduces the search for the optimal LDP protocol to enumerating the set of vertices of $\Gamma$, a $a(a-1)$-dimensional convex polytope.

One might argue that, since the optimal $\mathcal{Q}$ depends on $\recht{p}_{S,X}$, the publication of $\mathcal{Q}$ might provide an aggregator with information about the distribution of $S$. However, information on the distribution (as opposed to information of individual users' data) is not considered sensitive \cite{lopuhaa2019information}. In fact, the reason why the aggregator sanitises the data is because an attacker is assumed to have knowledge about this correlation, and revealing too much information about $X$ would cause the aggregator to use this information to infer information about~$S$.


\section{Optimizing $\mathcal{Q}$ for $\varepsilon$-LIP} \label{sec:lip}

If one uses $\varepsilon$-LIP as a privacy metric, one can find the optimal sanitisation protocol in a similar fashion. To do this, we again describe $\mathcal{Q}$ as a matrix, but this time a different one. Let $q \in \mathbb{R}^{b}$ be the probability mass function of $Y$, and let $R \in \mathbb{R}^{a \times b}$ be given by $R_{x|y} = \mathbb{P}(X = x | Y = y)$; we denote its $y$-th row by $R_{X|y} \in \mathbb{R}^a$. Then, a pair $(R,q)$ defines a sanitisation protocol $\mathcal{Q}$ satisfying $\varepsilon$-LIP if and only if
\begin{align}
\forall y \colon& \ 0 \leq q_y,\\
& \ Rq = \recht{p}_X, \label{eq:LIP2}\\
\forall y\colon& \ \sum_x R_{x|y} = 1,\label{eq:LIP3}\\
\forall x,y\colon& \ 0 \leq R_{x|y},\label{eq:LIP4}\\
\forall y,s\colon& \ \textrm{e}^{-\varepsilon}\recht{p}_s \leq \recht{p}_{s|X}R_{X| y} \leq \textrm{e}^{\varepsilon}\recht{p}_s. \label{eq:LIP5}
\end{align}

Note that (\ref{eq:LIP5}) defines the $\varepsilon$-LIP condition, since for a given $s,y$ we have $\frac{\recht{p}_{s|X}R_{X| y}}{\recht{p}_S} = \frac{\mathbb{P}(S=s|Y=y)}{\mathbb{P}(S=s)} = \frac{\mathbb{P}(Y=y|S=s)}{\mathbb{P}(Y=y)}$. (In)equalities (\ref{eq:LIP3}--\ref{eq:LIP5}) can be expressed as saying that for every $y \in \mathcal{Y}$ one has that $R_{X|y} \in \Delta$, where $\Delta$ is the convex closed bounded polytope in $\mathbb{R}^{\mathcal{X}}$ given by
\begin{equation}
\Delta = \left\{v \in \mathbb{R}^{\mathcal{X}}:  \begin{array}{l}
                          \sum_x v_x = 1,\\
                          \forall x: 0 \leq v_x,\\
                          \forall s: \textrm{e}^{-\varepsilon}\recht{p}_s \leq \recht{p}_{s|X}v \leq \textrm{e}^{\varepsilon}\recht{p}_s
                          \end{array}\right\}. \label{eq:delta}
\end{equation}
As in Theorem \ref{thm:ldp}, we can use this polytope to find optimal protocols:

\begin{theorem} \label{thm:lip}
Let $\varepsilon \in \mathbb{R}_{\geq 0}$, and let $\Delta$ be the polytope above. Let $\mathcal{V} = \{v_1,\ldots,v_M\}$ be its set of vertices. For $v_i \in \mathcal{V}$, let $\recht{H}(v_i)$ be its entropy, i.e.
\begin{equation}
    \recht{H}(v_i) = -\sum_{x \in \mathcal{X}} v_x \ln(v_{i,x}).
\end{equation}
Let $\hat{\alpha}$ be the solution to the optimisation problem
\begin{align}
\textrm{\emph{minimise}}_{\alpha \in \mathbb{R}^{M}} & \ \ \sum_{i=1}^M \recht{H}(v_i)\alpha_i \\
\textrm{\emph{subject}} \ \textrm{\emph{to}} & \ \ \forall i: \alpha_i \geq 0,\\
& \ \ \sum_{i=1}^M \alpha_iv_i = \recht{p}_X.
\end{align}
Then the $\varepsilon$-LIP protocol $\mathcal{Q}\colon \mathcal{X} \rightarrow \mathcal{Y}$ that maximises $\recht{I}(X;Y)$ is given by
\begin{align}
    \mathcal{Y} &= \{i \leq M: \hat{\alpha}_i > 0\},\\
    q_i &= \hat{\alpha}_i,\\
    R_{x|i} &= v_{i,x},
\end{align}
for all $i \in \mathcal{Y} \subseteq\{1,\ldots,M\}$ and all $x \in \mathcal{X}$. One has $b \leq a$.
\end{theorem}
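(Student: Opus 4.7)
The plan is to translate the search for the optimal $\varepsilon$-LIP protocol into the linear program in the theorem statement, in three steps. First I would reformulate the objective: since $\recht{I}(X;Y) = \recht{H}(X) - \recht{H}(X|Y)$ and $\recht{H}(X)$ depends only on the fixed prior $\recht{p}_X$, maximising $\recht{I}(X;Y)$ is the same as minimising
\begin{equation}
\recht{H}(X|Y) = \sum_{y \in \mathcal{Y}} q_y \recht{H}(R_{X|y}).
\end{equation}
Inequalities \eqref{eq:LIP3}--\eqref{eq:LIP5} are precisely the condition $R_{X|y} \in \Delta$ for every $y$, and \eqref{eq:LIP2} is the moment equation $\sum_y q_y R_{X|y} = \recht{p}_X$. (Note also that summing \eqref{eq:LIP2} over $x$ forces $\sum_y q_y = 1$.) Hence the problem is to decompose $\recht{p}_X$ as a convex combination of points in $\Delta$, over an arbitrary finite alphabet $\mathcal{Y}$, so as to minimise the weighted sum of entropies.

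Next I would argue that the minimum is attained when each $R_{X|y}$ is a vertex of $\Delta$, using the concavity of $\recht{H}$. Write each $R_{X|y} \in \Delta$ as a convex combination $R_{X|y} = \sum_{j=1}^M \beta_j^{(y)} v_j$ of vertices of $\Delta$; then concavity gives
\begin{equation}
\recht{H}(R_{X|y}) \geq \sum_{j=1}^M \beta_j^{(y)} \recht{H}(v_j).
\end{equation}
Setting $\alpha_j := \sum_y q_y \beta_j^{(y)}$ yields $\alpha_j \geq 0$, $\sum_j \alpha_j v_j = \recht{p}_X$, and $\sum_j \alpha_j \recht{H}(v_j) \leq \sum_y q_y \recht{H}(R_{X|y})$. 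Thus the infimum of the original problem is no smaller than the minimum of the stated LP; conversely, any feasible $\alpha$ for the LP defines an $\varepsilon$-LIP protocol (take $\mathcal{Y} = \{i : \alpha_i > 0\}$, $q_i = \alpha_i$, $R_{X|i} = v_i$) that attains the same value. Feasibility of the LP is automatic since $\recht{p}_X$ itself lies in $\Delta$ (the LIP inequalities reduce to $\textrm{e}^{-\varepsilon}\recht{p}_s \leq \recht{p}_s \leq \textrm{e}^{\varepsilon}\recht{p}_s$) and therefore in $\textrm{conv}(\mathcal{V})$.

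Finally, the bound $b \leq a$ will follow from standard linear-programming theory: the equality constraints $\sum_i \alpha_i v_i = \recht{p}_X$ are $a$ linear equations in the variables $\alpha_i$, so every basic feasible solution has at most $a$ strictly positive entries, and a linear objective on a non-empty bounded polyhedron attains its minimum at such a vertex. We may therefore select $\hat{\alpha}$ with $|\{i : \hat{\alpha}_i > 0\}| \leq a$, giving $b \leq a$. The main obstacle is the vertex reduction of the middle step: one has to check that the regrouping of coefficients $\alpha_j = \sum_y q_y \beta_j^{(y)}$ preserves the moment constraint and weakly improves the objective, which is exactly what concavity of $\recht{H}$ delivers; the rest is routine bookkeeping together with the familiar vertex characterisation of LP optima.
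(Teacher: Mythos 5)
Your proof is correct and is, in substance, the same argument the paper relies on: the paper's own proof merely cites the perfect-privacy case of Rassouli--G\"und\"uz and notes that replacing their equality constraints by the LIP inequalities changes nothing, and what you have written out --- reducing the problem to minimising $\sum_y q_y \recht{H}(R_{X|y})$ subject to $\sum_y q_y R_{X|y} = \recht{p}_X$ with each $R_{X|y} \in \Delta$, pushing each $R_{X|y}$ to the vertices of $\Delta$ via concavity of $\recht{H}$, and bounding the support of a basic optimal solution by the number $a$ of equality constraints to get $b \leq a$ --- is exactly that argument made self-contained. No gaps.
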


\begin{proof}
This was proven for $\varepsilon = 0$ (i.e., when $S$ and $Y$ are independent) in \cite{rassouli2018perfect}, but the proof works similarly for $\varepsilon > 0$; the main difference is that the equality constraints of their (10) will be replaced by the inequality constraints of our (\ref{eq:LIP5}), but this has no impact on the proof presented there. 
\end{proof}

Since linear optimization problems can be solved fast, again the optimization problem reduces to finding the vertices of a polytope. 
The advantage of using LIP instead of LDP is that $\Delta$ is a $(a-1)$-dimensional polytope, while $\Gamma$ of Section \ref{sec:ldp} is $a(a-1)$-dimensional. The time complexity of vertex enumeration is linear in the number of vertices \cite{avis1992pivoting}, while the number of vertices can grow exponentially in the dimension of the polyhedron \cite{barany2000polytopes}. Together, this means that the dimension plays a huge role in the time complexity, hence we expect finding the optimum under LIP to be significantly faster than under LDP.

\section{Multiple Attributes} \label{sec:mult}

An often-occuring scenario is that a user's data consists of multiple attributes, i.e., $X = (X^1,\ldots,X^m) \in \mathcal{X} = \mathcal{X}^1 \times \cdots \times \mathcal{X}^m$. This can be problematic for our approach for two reasons:
\begin{enumerate}
\item Such a large $\mathcal{X}$ can be problematic, since the computing time for optimisation both under LDP and LIP will depend heavily on $a$.
\item In practice, an attacker might sometimes utilise side channels to access some subsets of attributes $X_i^j$ for some users. For these users, a sanitisation protocol can leak more information (w.r.t. to the attacker's updated prior information) than its LDP/LIP parameter would suggest.
\end{enumerate}

To see how the second problem might arise in practice, suppose that $X^1_i$ is the height of individual $i$, $X^2_i$ is their weight, and $S_i$ is whether $i$ is obese or not. Since height is only lightly correlated with obesity, taking $Y_i = X_i^1$ would satisfy $\varepsilon$-LIP for some reasonably small $\varepsilon$. However, suppose that an attacker has access to $X^2_i$ via a side channel. While knowing $i$'s weight gives the attacker some, but not perfect knowledge about $i$'s obesity, the combination of the weight from the side channel, and the height from the $Y_i$, allows the attacker to calculate $i$'s BMI, giving much more information about $i$'s obesity. Therefore, the given protocol gives much less privacy in the presence of this side channel.

To solve the second problem, we introduce a more stringent privacy notion called \emph{Side-channel Resistant LIP} (SRLIP), which ensures that no matter which attributes an attacker has access to, the protocol still satisfies $\varepsilon$-LIP with respect to the attacker's new prior distribution. One could similarly introduce SRLDP, and many results will still hold for this privacy measure; nevertheless, since we concluded that LIP is preferable to LDP, we focus on SRLIP. For any subset $J \subseteq \{1,\ldots,m\}$, we write $\mathcal{X}^J = \prod_{j \in J} \mathcal{X}^j$ and its elements as $x^J$.

\begin{definition} \emph{($\varepsilon$-SRLIP)}. Let $\varepsilon > 0$, and let $\mathcal{X} = \prod_{j=1}^m \mathcal{X}^j$. 
We say that $\mathcal{Q}$ satisfies $\varepsilon$-SRLIP if for every $y \in \mathcal{Y}$, for every $s \in \mathcal{S}$, for every $J \subseteq \{1,\ldots,m\}$, and for every $x^J \in \mathcal{X}^J$ one has
\begin{equation}
\textrm{\emph{e}}^{-\varepsilon} \leq \frac{\mathbb{P}(Y = y | S = s, X^J = x^J)}{\mathbb{P}(Y = y | X^J = x^J)} \leq \textrm{\emph{e}}^{\varepsilon}.
\end{equation}

\end{definition}

In terms of Remark \ref{rem:prior}, $\mathcal{Q}$ satisfies $\varepsilon$-SRLIP if and only if it satisfies $\varepsilon$-LIP w.r.t. $\recht{p}_{S,X|x^J}$ for all $J$ and $x^J$. Taking $J = \varnothing$ gives us the regular definition of $\varepsilon$-LIP, proving the following Lemma:

\begin{lemma} \label{lem:srlip}
Let $\varepsilon > 0$. If $\mathcal{Q}$ satisfies $\varepsilon$-SRLIP, then $\mathcal{Q}$ satisfies $\varepsilon$-LIP.
\end{lemma}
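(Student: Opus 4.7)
The plan is essentially to unpack the definition of $\varepsilon$-SRLIP at the single choice $J = \varnothing$ and observe that the resulting inequality is exactly the definition of $\varepsilon$-LIP. Concretely, I would first note that the SRLIP definition quantifies over all subsets $J \subseteq \{1,\ldots,m\}$, so in particular the defining inequality must hold for the empty subset.

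Next I would handle the degenerate case $J = \varnothing$ carefully. Here $\mathcal{X}^J = \prod_{j \in \varnothing} \mathcal{X}^j$ is the empty product, i.e.\ a one-element set whose sole element $x^\varnothing$ corresponds to a vacuous event of full probability. Consequently $\mathbb{P}(Y = y \mid S = s, X^\varnothing = x^\varnothing) = \mathbb{P}(Y = y \mid S = s)$ and $\mathbb{P}(Y = y \mid X^\varnothing = x^\varnothing) = \mathbb{P}(Y = y)$, so the SRLIP inequality
\[
\textrm{e}^{-\varepsilon} \leq \frac{\mathbb{P}(Y = y \mid S = s, X^J = x^J)}{\mathbb{P}(Y = y \mid X^J = x^J)} \leq \textrm{e}^{\varepsilon}
\]
collapses, for $J = \varnothing$, to
\[
\textrm{e}^{-\varepsilon} \leq \frac{\mathbb{P}(Y = y \mid S = s)}{\mathbb{P}(Y = y)} \leq \textrm{e}^{\varepsilon},
\]
which is precisely the $\varepsilon$-LIP condition from the definition in Section \ref{sec:maths}. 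Since this must hold for all $y \in \mathcal{Y}$ and $s \in \mathcal{S}$ by the SRLIP hypothesis, $\mathcal{Q}$ satisfies $\varepsilon$-LIP.

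There is no real obstacle: the lemma is a direct specialization of the defining quantifier in SRLIP. The only point that requires a sentence of justification is the convention that an empty product of sets is a singleton and that conditioning on a probability-one event is trivial; beyond this observation the argument is immediate, matching the remark made just before the lemma statement.
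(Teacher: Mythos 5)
Your proof is correct and is exactly the paper's argument: the paper dispatches the lemma with the one-line remark that taking $J = \varnothing$ in the SRLIP definition recovers the $\varepsilon$-LIP condition. Your additional care about the empty product being a singleton and conditioning on a probability-one event being vacuous is the right (and only) detail to spell out.
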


While SRLIP is stricter than LIP itself, it has the advantage that even when an attacker has access to some data of a user, the sanitisation protocol still does not leak an unwanted amount of information beyond the knowledge the attacker has gained via the side channel. Another advantage is that, contrary to LIP itself, SRLIP satisfies an analogon of the concept of \emph{privacy budget} \cite{dwork2006calibrating}:

\begin{theorem} \label{thm:budget} 
Let $\mathcal{X} = \prod_{j=1}^m \mathcal{X}^j$, and for every $j$, let $\mathcal{Q}^j\colon \mathcal{X}^j \rightarrow \mathcal{Y}^j$ be a sanitisation protocol. Let $\varepsilon^j \in \mathbb{R}_{\geq 0}$ for every $j$. Suppose that for every $j \leq m$, for every $J \subseteq\{1,\ldots,j-1,j+1,\ldots,m\}$, and every $x^{J} \in \mathcal{X}^{J}$, $\mathcal{Q}^j$ satisfies $\varepsilon^j$-LIP w.r.t. $\recht{p}_{S,X|x^{J}}$. Then $\prod_j \mathcal{Q}^j \colon \mathcal{X} \rightarrow \prod_j \mathcal{Y}^j$ satisfies $\sum_j \varepsilon^j$-SRLIP.
\end{theorem}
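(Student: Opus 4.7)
My plan is to reduce the $\sum_j\varepsilon^j$-SRLIP conclusion to an ordinary $\sum_j\varepsilon^j$-LIP statement, and then prove the latter by induction on $m$. By the characterisation of SRLIP noted just before Lemma~\ref{lem:srlip}, it suffices to show that for every $J'\subseteq\{1,\dots,m\}$ and every $x^{J'}\in\mathcal{X}^{J'}$, $\prod_j\mathcal{Q}^j$ is $\sum_j\varepsilon^j$-LIP w.r.t.\ $\recht{p}_{S,X\mid x^{J'}}$. After fixing such a conditioning the outputs $Y^j$ for $j\in J'$ are drawn from $Q^j(\cdot\mid x^j)$ independently of $S$ and cancel from both sides of the LIP ratio, while for any $j\notin J'$ and $K\subseteq\{1,\dots,m\}\setminus(J'\cup\{j\})$ the hypothesis applied with conditioning set $K\cup J'$ supplies the required LIP condition on the conditioned joint distribution. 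Thus the task reduces to proving the unconditional version: under the stated hypothesis, $\prod_j\mathcal{Q}^j$ is $\sum_j\varepsilon^j$-LIP w.r.t.\ $\recht{p}_{S,X}$.

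I would induct on $m$; the case $m=1$ is immediate. For the inductive step I factor the LIP ratio via the chain rule on $Y$:
$$
\frac{\mathbb{P}(Y=y\mid S=s)}{\mathbb{P}(Y=y)}=\frac{\mathbb{P}(Y^{<m}=y^{<m}\mid S=s)}{\mathbb{P}(Y^{<m}=y^{<m})}\cdot\frac{\mathbb{P}(Y^m=y^m\mid S=s,Y^{<m}=y^{<m})}{\mathbb{P}(Y^m=y^m\mid Y^{<m}=y^{<m})}.
$$
The first factor is $\le e^{\sum_{j<m}\varepsilon^j}$ by the inductive hypothesis applied to $\{\mathcal{Q}^j\}_{j<m}$ on the distribution $\recht{p}_{S,X^{<m}}$; the required conditional LIP statements for those $\mathcal{Q}^j$ follow from the given hypothesis because LIP of $\mathcal{Q}^j$ w.r.t.\ a joint distribution depends only on its $(S,X^j)$-marginal.

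Bounding the second factor by $e^{\varepsilon^m}$ is the main obstacle. The relevant distribution $\recht{p}_{S,X^m\mid Y^{<m}=y^{<m}}$ can be written as the mixture $\sum_{x^{<m}}\mathbb{P}(X^{<m}=x^{<m}\mid Y^{<m}=y^{<m})\,\recht{p}_{S,X^m\mid X^{<m}=x^{<m}}$ of the pointwise conditionals appearing in the hypothesis, and LIP is not preserved under arbitrary mixtures, so one cannot invoke LIP on each component in isolation. The idea is to apply the pointwise LIP of $\mathcal{Q}^m$ w.r.t.\ $\recht{p}_{S,X\mid x^{<m}}$ inside the sum over $x^{<m}$, and then use the Markov structure $S\to X\to Y$ together with Bayes' rule to identify the resulting quotient with an LIP ratio of the already-controlled $\prod_{j<m}\mathcal{Q}^j$ on a conditional distribution still covered by the hypothesis, closing the induction. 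The lower bound $e^{-\sum_j\varepsilon^j}$ then follows by the same argument applied to the reciprocal.
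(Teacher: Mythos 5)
Your reduction of $\varepsilon$-SRLIP to a family of conditional LIP statements is sound, as is your handling of the first factor in the chain-rule decomposition over $Y$. The problem is the step you yourself flag as ``the main obstacle'': it is not merely unfinished — the repair you sketch does not close. After applying the pointwise bound $\recht{p}_{y^m|s,x^{<m}}\leq \textrm{e}^{\varepsilon^m}\recht{p}_{y^m|x^{<m}}$ inside the mixture, you must compare $\sum_{x^{<m}}\recht{p}_{y^m|x^{<m}}\recht{p}_{x^{<m}|s,y^{<m}}$ with $\sum_{x^{<m}}\recht{p}_{y^m|x^{<m}}\recht{p}_{x^{<m}|y^{<m}}$. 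Bayes' rule together with the Markov identity $\recht{p}_{y^{<m}|x^{<m},s}=\recht{p}_{y^{<m}|x^{<m}}$ gives
\begin{equation*}
\frac{\recht{p}_{x^{<m}|s,y^{<m}}}{\recht{p}_{x^{<m}|y^{<m}}}=\frac{\recht{p}_{x^{<m}|s}}{\recht{p}_{x^{<m}}}\cdot\frac{\recht{p}_{y^{<m}}}{\recht{p}_{y^{<m}|s}}.
\end{equation*}
The second factor is indeed an LIP ratio of the already-controlled $\prod_{j<m}\mathcal{Q}^j$ and is bounded by $\textrm{e}^{\sum_{j<m}\varepsilon^j}$ via your inductive hypothesis, but the first factor is the posterior-to-prior ratio of the \emph{raw} attributes $X^{<m}$, which nothing in the theorem's hypotheses bounds; it is largest precisely when $X^{<m}$ is strongly correlated with $S$, which is the regime the theorem is meant to cover. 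So the ``resulting quotient'' is not a controlled quantity, and the induction does not close as described.

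The structural reason is that your decomposition conditions on the previous \emph{outputs} $y^{<m}$, about which the hypothesis says nothing. The paper's proof never does this: it expands $\recht{p}_{y|s,x^J}$ by integrating out the unconditioned raw attributes one index at a time, so that the factor produced at stage $j$ is the ratio of $\recht{p}_{y^j|s,x^{J[j]}}$ to $\recht{p}_{y^j|x^{J[j]}}$ for a set $J[j]$ of raw-attribute indices — exactly the quantity that the (deliberately strengthened) hypothesis ``$\mathcal{Q}^j$ is $\varepsilon^j$-LIP w.r.t.\ $\recht{p}_{S,X|x^{J}}$ for every subset $J$ of the other attributes'' bounds by $\textrm{e}^{\pm\varepsilon^j}$. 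To rescue your argument you would need to peel off $X^m$ conditioned on $x^{<m}$ rather than on $y^{<m}$, at which point you are essentially reproducing the paper's computation rather than giving a genuinely different route.
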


The proof is presented in Appendix \ref{app:proof}. This theorem tells us that to find a $\varepsilon$-SRLIP protocol for $\mathcal{X}$, it suffices to find a sanitisation protocol for each $\mathcal{X}^j$ that is $\frac{\varepsilon}{m}$-LIP w.r.t. a number of prior distributions. Unfortunately, the method of finding an optimal $\varepsilon$-LIP protocol w.r.t. one prior $\recht{p}_{S,X}$ of Theorem \ref{thm:lip} does not transfer to the multiple prior setting. This is because this method only finds one $(R,q)$, while by (\ref{eq:LIP2}) we need a different $(R,q)$ for each prior distribution. Therefore, we are forced to adopt an approach similar to the one in Theorem \ref{thm:ldp}. The matrix $Q^j$ (given by $Q^j_{y^j|x^j} = \mathbb{P}(\mathcal{Q}^j(x^j) = y^j$)) corresponding to $\mathcal{Q}^j\colon \mathcal{X}^j \rightarrow \mathcal{Y}^j$ satisfies the criteria of Theorem \ref{thm:budget} if and only if the following criteria are satisfied:

\begin{align}
\forall x^j: & \ \sum_{y^j} Q^j_{y^j|x^j} = 1,\label{eq:rslip1}\\
\forall x ^j,y^j: & \ 0 \leq Q^j_{y^j|x^j},\\
\forall J, x^J, s,y^j: & \ \textrm{e}^{-\varepsilon/m}(Q^j\recht{p}_{X^j|x^{J}})_{y^j} \leq (Q^j\recht{p}_{X^j|s,x^{J}})_{y^j},\\
\forall J, x^J, s,y^j: & \ (Q^j\recht{p}_{X^j|s,x^{J}})_{y^j} \leq \textrm{e}^{\varepsilon/m} (Q^j\recht{p}_{X^j|x^{J}})_{y^j}. \label{eq:rslip5}
\end{align}

Similar to Theorem \ref{thm:ldp}, we can find the optimal $\mathcal{Q}^j$ satisfying these conditions by finding the vertices of the polytope defined by (\ref{eq:rslip1}--\ref{eq:rslip5}). In terms of time complexity, the comparison to finding the optimal $\varepsilon$-LIP protocol via Theorem \ref{thm:lip} versus finding a $\varepsilon$-SRLIP protocol via Theorem \ref{thm:budget} is not straightforward. The complexity of enumerating the vertices of a polytope is $\mathcal{O}(ndv)$, where $n$ is the number of inequalities, $d$ is the dimension, and $v$ is the number of vertices \cite{avis1992pivoting}. For the $\Delta$ of Theorem \ref{thm:lip} we have $d = a-1$ and $n = a+2c$. In contrast, the polytope defined by (\ref{eq:rslip1}--\ref{eq:rslip5}) satisfies $d = a^j(a^j-1)$ and $n = (a^j)^2+2c\prod_{j' \neq j} (a^{j'}+1)$. Finding $v$ for both these polytopes is difficult, but in general $v \leq \binom{n}{d}$. Since this grows exponentially in $d$, we expect Theorem \ref{thm:budget} to be faster when the $a^j$ are small compared to $a$, i.e., when $m$ is large. We will investigate this experimentally in the next section.

\section{Explicit protocols} \label{sec:cr}

The methods of Sections \ref{sec:ldp} and \ref{sec:lip} allow us to find the optimal LDP and LIP protocols.
The complexity depends heavily on $a$ and $c$, and can become computationally infeasible for large $a$ and $c$. 
For such datasets, one has to rely on predetermined privacy algorithms. 
We consider two approaches: as a benchmark, we discuss how `standard' LDP protocols can be applied to the Privacy Funnel situation, and we introduce a new method, Conditional Reporting, that is meant to address the shortcomings of standard LDP protocols. As in the previous section, we focus on LIP, but much of the discussion carries over to LDP as well.

\subsection{Standard LDP protocols}

In the literature, there are many examples of protocols $\mathcal{Q}\colon \mathcal{X} \rightarrow \mathcal{Y}$, depending on a privacy parameter $\alpha$, whose output satisfies $\alpha$-LDP with respect to $X$; for an overview see \cite{yang2020local}. Such a protocol automatically satisfies $\alpha$-LDP, hence certainly $\alpha$-LIP, with respect to $S$. However, because $X$ is only indirectly correlated with $Y$, such a protocol's actual LIP value may be lower. We can find the privacy of such a protocol $\mathcal{Q}$ by
\begin{equation}
\recht{LIP}(\mathcal{Q}) = \max_{y \in \mathcal{Y},s \in \mathcal{S}} \left|\ln \frac{\sum_x Q_{y|x}\recht{p}_{x|s}}{\sum_x Q_{y|x}\recht{p}_x}\right|;
\end{equation}
then $\mathcal{Q}$ satisfies $\varepsilon$-LIP if and only if $\recht{LIP}(\mathcal{Q}) \leq \varepsilon$. 

For this paper we are mainly interested in two protocols. The first one is Generalised Rapid Response (GRR) \cite{warner1965randomized}. We are interested in GRR because for large enough $\alpha$ it maximises $\recht{I}(X;Y)$ \cite{kairouz2014extremal}. Given $\alpha$, GRR is a privacy protocol $\recht{GRR}^{\alpha}\colon\mathcal{X} \rightarrow \mathcal{X}$ given by
\begin{equation}
\recht{GRR}^{\alpha}_{y|x} = \left\{\begin{array}{ll}
\frac{\textrm{e}^{\alpha}}{\textrm{e}^{\alpha}+a-1}, & \textrm{ if $x = y$},\\
\frac{1}{\textrm{e}^{\alpha}+a-1}, & \textrm{ if $x \neq y$}.
\end{array}\right.
\end{equation}
A direct calculation then shows that
\begin{equation} \label{eq:lipgrr}
\recht{LIP}(\recht{GRR}^{\alpha}) = \max_{x,s} \left|\ln\frac{1+(\textrm{e}^{\alpha}-1)\recht{p}_{x|s}}{1+(\textrm{e}^{\alpha}-1)\recht{p}_{x}}\right|.
\end{equation}
If we want GRR to satisfy $\varepsilon$-LIP, we then need to solve $\recht{LIP}(\recht{GRR}^{\alpha}) = \varepsilon$ for $\alpha$. Since $\recht{LIP}(\recht{GRR}^{\alpha})$ is increasing in $\alpha$, this can be done fast computationally.

The second protocol that is relevant to this paper is Optimised Unary Encoding (OUE) \cite{wang2017locally}. This protocol is notable for being one of the protocols that has the least known variance in frequency estimation \cite{wang2017locally}. For a choice of $\alpha$ as privacy parameter, and an input $x$, the output of $\recht{OUE}^{\alpha}\colon \mathcal{X} \rightarrow 2^{\mathcal{X}}$ is a vector of independent Bernoulli variables $E_{x'}$ for $x' \in \mathcal{X}$, satisfying
\begin{equation}
\mathbb{P}(E_{x'} = 1) = \left\{\begin{array}{ll}
\frac{1}{2}, & \textrm{ if $x'= x$},\\
\frac{1}{\textrm{e}^{\alpha}+1}, & \textrm{ if $x' \neq x$}.
\end{array}\right.
\end{equation}
In other words, If we identify a $y \in 2^{\mathcal{X}}$ with a subset of $\mathcal{X}$ (so $\#y$ denotes its cardinality), we get
\begin{equation}
\recht{OUE}^{\alpha}_{y|x} = \left\{\begin{array}{ll}
     \frac{\textrm{e}^{(a-\#y)\alpha}}{2(\textrm{e}^{\alpha}+1)^{a-1}},& \textrm{ if $x \in y$},  \\
     \frac{\textrm{e}^{(a-\#y-1)\alpha}}{2(\textrm{e}^{\alpha}+1)^{a-1}},& \textrm{ if $x \notin y$}. 
\end{array}\right.
\end{equation}

It follows that
\begin{equation}
\recht{LIP}(\recht{OUE}^{\alpha}) = \max_{y,s} \left|\ln\frac{1+(\textrm{e}^{\alpha}-1)\sum_{x\in y}\recht{p}_{x|s}}{1+(\textrm{e}^{\alpha}-1)\sum_{x\in y}\recht{p}_{x}}\right|.
\end{equation}

\subsection{Conditional Reporting}

In general, a generic LDP protocol will not be ideal for our situation, since these are designed to obscure all information about $X$, rather than just the part that holds information about~$S$. To address this shortcoming, we introduce the {\em Conditional Reporting} (CR) in Algorithm \ref{alg:cr}. This mechanism needs both $S$ and $X$ as input; hence it differs from the other protocols discussed in this paper, which only have $X$ as input. The value of $S$ is masked by Randomised Response. If the output $\tilde{s}$ equals $S$, we return the true value of $X$. If not, we output a random one, whose probability distribution is given by $\recht{p}_{X|\tilde{s}}$.

\begin{algorithm}
\SetAlgoLined
\SetKwInOut{Input}{Input}\SetKwInOut{Output}{Output}
\SetKwFunction{Sort}{Sort}
\Input{Privacy parameter $\alpha$; Probability distribution $\recht{p}_{S,X}$; input $(s,x) \in \mathcal{S} \times \mathcal{X}$}
\Output{ $y \in \mathcal{X}$}
\BlankLine
Sample $\tilde{S} \in \mathcal{S}$ with $\mathbb{P}(\tilde{S} = s') = \left\{\begin{array}{ll} \frac{\textrm{e}^{\alpha}}{\textrm{e}^{\alpha}+\#\mathcal{S}-1}, & \textrm{ if $s' = s$},\\
\frac{1}{\textrm{e}^{\alpha_0}+\#\mathcal{S}-1}, & \textrm{ otherwise}\;\end{array}\right.$
 \caption{Conditional Reporting ($\recht{CR}^{\alpha}$)}
\eIf{$\tilde{s} = s$}{$y \leftarrow x$\;}{Sample $\tilde{x} \in \mathcal{X}$ with $\mathbb{P}(\tilde{x} = x') = \recht{p}_{x'|\tilde{s}}$\;
$y \leftarrow \tilde{x}$\;}
 \label{alg:cr}
\end{algorithm}

$\recht{CR}^{\alpha}$ certainly satisfies $\alpha$-LDP, hence $\alpha$-LIP, w.r.t. $S$. However, if $S$ and $X$ are not perfectly correlated, we can get better privacy, as outlined by the proposition below.

\begin{proposition} \label{prop:cr}
Given a probability distribution $\recht{p}_{X,S}$ and a $\alpha \geq 0$, define
\begin{equation} \label{eq:lipcr}
L(\alpha) = \max_{x,s} \left|\ln \frac{(\textrm{e}^{\alpha}-1)\recht{p}_{x|s}+\sum_{s'}\recht{p}_{x|s'}}{(\textrm{e}^{\alpha}-1)\recht{p}_{x}+\sum_{s'}\recht{p}_{x|s'}} \right|.
\end{equation}
Then $\recht{CR}^{\alpha}$ satisfies $\varepsilon$-LIP if and only if $\varepsilon \geq L(\alpha)$.
\end{proposition}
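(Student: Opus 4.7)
The plan is to directly compute the ratio $\mathbb{P}(Y=y|S=s)/\mathbb{P}(Y=y)$ under the Conditional Reporting mechanism, recognise it as the argument of the logarithm in $L(\alpha)$, and then invoke the definition of $\varepsilon$-LIP. The only real work is bookkeeping; there is no clever step to discover.

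First I would compute $\mathbb{P}(Y=y\mid S=s)$ by conditioning on the auxiliary variable $\tilde{S}$ sampled inside Algorithm \ref{alg:cr}. When $\tilde{S}=s$ (which happens with probability $\textrm{e}^{\alpha}/(\textrm{e}^{\alpha}+c-1)$) the output is drawn as $X\mid S=s$, so contributes $\recht{p}_{y|s}$; when $\tilde{S}=s'\neq s$ (each with probability $1/(\textrm{e}^{\alpha}+c-1)$) the output is drawn from $\recht{p}_{X|s'}$, contributing $\recht{p}_{y|s'}$. Collecting these terms gives
\begin{equation}
\mathbb{P}(Y=y\mid S=s) \;=\; \frac{1}{\textrm{e}^{\alpha}+c-1}\Bigl[(\textrm{e}^{\alpha}-1)\recht{p}_{y|s} + \sum_{s'}\recht{p}_{y|s'}\Bigr].
\end{equation}
Averaging over $\recht{p}_s$, and using $\sum_s \recht{p}_s\recht{p}_{y|s} = \recht{p}_y$ together with $\sum_s\recht{p}_s\sum_{s'}\recht{p}_{y|s'} = \sum_{s'}\recht{p}_{y|s'}$, yields
\begin{equation}
\mathbb{P}(Y=y) \;=\; \frac{1}{\textrm{e}^{\alpha}+c-1}\Bigl[(\textrm{e}^{\alpha}-1)\recht{p}_{y} + \sum_{s'}\recht{p}_{y|s'}\Bigr].
\end{equation}

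Next I would take the ratio; the prefactor $1/(\textrm{e}^{\alpha}+c-1)$ cancels, leaving
\begin{equation}
\frac{\mathbb{P}(Y=y\mid S=s)}{\mathbb{P}(Y=y)} \;=\; \frac{(\textrm{e}^{\alpha}-1)\recht{p}_{y|s} + \sum_{s'}\recht{p}_{y|s'}}{(\textrm{e}^{\alpha}-1)\recht{p}_{y} + \sum_{s'}\recht{p}_{y|s'}},
\end{equation}
which is precisely the expression inside the logarithm defining $L(\alpha)$ (after renaming $y\to x$).

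Finally I would observe that $\varepsilon$-LIP w.r.t.\ $S$ demands $|\ln(\mathbb{P}(Y=y|S=s)/\mathbb{P}(Y=y))|\leq \varepsilon$ for all $(s,y)$, i.e.\ the maximum of this quantity over $(s,y)$ is at most $\varepsilon$. Since that maximum is exactly $L(\alpha)$, the equivalence $\recht{CR}^{\alpha}\text{ satisfies }\varepsilon\text{-LIP}\iff \varepsilon\geq L(\alpha)$ follows immediately. The main ``obstacle'' is trivial: one has to be careful to include the $\tilde{s}=s$ term in the sum over $s'$ (which is why the coefficient of $\recht{p}_{y|s}$ becomes $\textrm{e}^{\alpha}-1$ rather than $\textrm{e}^{\alpha}$, and the sum $\sum_{s'}\recht{p}_{y|s'}$ runs over all of $\mathcal{S}$). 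No compactness, convexity, or optimisation argument is required.
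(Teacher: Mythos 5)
Your proposal is correct and follows essentially the same route as the paper's own proof: condition on the auxiliary draw $\tilde{S}$ to get $\mathbb{P}(Y=y\mid S=s)=\frac{1}{\mathrm{e}^{\alpha}+c-1}\bigl[(\mathrm{e}^{\alpha}-1)\recht{p}_{y|s}+\sum_{s'}\recht{p}_{y|s'}\bigr]$, average over $\recht{p}_S$ to get the denominator, cancel the common prefactor, and read off the LIP condition. The only cosmetic difference is that the paper first writes out $Q_{y|x,s}$ and sums over $x$ before simplifying, whereas you condition on $\tilde{S}$ directly; the computations are identical.
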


The proof is presented in Appendix \ref{app:proof}. One can use this proposition to find the $\alpha$ needed to have $\recht{CR}^{\alpha}$ satisfy $\varepsilon$-LDP, by solving $L(\alpha) = \varepsilon$. At the very least one has the following upper bound:

\begin{proposition}
The protocol $\recht{CR}^{\alpha}$ satisfies $\alpha$-LDP. In particular, it satisfies $\alpha$-LIP, and $L(\alpha) \leq \alpha$.
\end{proposition}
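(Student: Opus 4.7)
The plan is to prove the $\alpha$-LDP statement directly; the $\alpha$-LIP claim then follows from the lemma in Section \ref{sec:maths}, and $L(\alpha)\leq\alpha$ follows from Proposition \ref{prop:cr} applied to any $\varepsilon\geq\alpha$. The key observation is that, once $X$ is marginalised out using the known distribution $\recht{p}_{X|S}$, the output of $\recht{CR}^{\alpha}$ conditioned on $S=s$ is equal in distribution to the output of a two-stage procedure: sample $\tilde{S}$ by applying standard Randomised Response with parameter $\alpha$ to $s$, then draw $Y\sim\recht{p}_{X|\tilde{S}}$. This recasts $\recht{CR}^{\alpha}$ as a post-processing of an RR mechanism acting on $S$, from which $\alpha$-LDP is immediate.

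Concretely, I would first compute
\begin{equation}
\mathbb{P}(Y=y\mid S=s)=\sum_{x\in\mathcal{X}}\recht{p}_{x|s}\,\mathbb{P}(Y=y\mid S=s,X=x)
\end{equation}
by splitting on the value of $\tilde{s}$. On the branch $\tilde{s}=s$ (probability $\tfrac{\textrm{e}^{\alpha}}{\textrm{e}^{\alpha}+c-1}$) the algorithm returns $x$, so after averaging $x$ against $\recht{p}_{x|s}$ the contribution is $\tfrac{\textrm{e}^{\alpha}}{\textrm{e}^{\alpha}+c-1}\recht{p}_{y|s}$. On each branch $\tilde{s}=s''\neq s$ (each of probability $\tfrac{1}{\textrm{e}^{\alpha}+c-1}$) the output is sampled from $\recht{p}_{X|s''}$ independently of $x$, contributing $\tfrac{1}{\textrm{e}^{\alpha}+c-1}\recht{p}_{y|s''}$. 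This is exactly the distribution one would get by first drawing $\tilde{S}$ via RR from $s$ and then sampling $Y\sim\recht{p}_{X|\tilde{S}}$.

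Given this equivalence, the argument concludes by invoking the post-processing immunity of LDP: the mechanism $s\mapsto\tilde{S}$ satisfies $\alpha$-LDP by the standard RR calculation, and the map $\tilde{S}\mapsto Y$ depends only on $\tilde{S}$ (and on public quantities), hence composing with it preserves the $\alpha$-LDP guarantee for the marginal of $Y$ given $S$. The implications $\alpha$-LDP $\Rightarrow$ $\alpha$-LIP and $L(\alpha)\leq\alpha$ are then immediate from the previously cited lemma and from the characterisation in Proposition \ref{prop:cr}.

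The main subtlety, and the step I would write out most carefully, is the distributional equivalence in the second paragraph. One must be careful that in the ``true'' branch ($\tilde{s}=s$) the algorithm outputs the deterministic value $x$ rather than resampling; the equivalence to a post-processing of RR only holds \emph{after} marginalising over $X$ with the prior $\recht{p}_{X|s}$, which is legitimate because the LDP ratio in question compares the $Y$-marginals conditioned on $S$. Once this bookkeeping is in place, no further computation is needed, and in particular one never has to deal directly with the expression (\ref{eq:lipcr}).
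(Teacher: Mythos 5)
Your proof is correct, and it takes a genuinely different route from the paper's for the decisive step. Both arguments begin with the same computation: marginalising over $X$ with $\recht{p}_{X|s}$ to obtain $\mathbb{P}(Y=y\mid S=s)=\tfrac{1}{\textrm{e}^{\alpha}+c-1}\bigl(\textrm{e}^{\alpha}\recht{p}_{y|s}+\sum_{s'\neq s}\recht{p}_{y|s'}\bigr)$, which is the paper's equation (\ref{eq:app1}). From there the paper bounds the likelihood ratio $\mathbb{P}(Y=y\mid S=s)/\mathbb{P}(Y=y\mid S=s')$ directly, via the small algebraic observation that adding the common nonnegative term $\sum_{s''\neq s,s'}\recht{p}_{y|s''}$ to numerator and denominator pushes the ratio toward $1$, and that $(\textrm{e}^{\alpha}u+v)/(u+\textrm{e}^{\alpha}v)\leq\textrm{e}^{\alpha}$. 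You instead recognise the marginal as the output of standard Randomised Response on $S$ followed by a channel $\tilde{S}\mapsto Y\sim\recht{p}_{X|\tilde{S}}$ that depends only on $\tilde{S}$ and public quantities, and conclude by post-processing immunity. Your route is more conceptual --- it explains \emph{why} CR is $\alpha$-LDP (it is, distributionally, RR on the secret followed by a data-independent decoding) and it avoids the ratio algebra entirely; the paper's route is more self-contained, needing no appeal to a post-processing lemma. You are also right to flag, and correctly resolve, the one subtlety: the actual mechanism is not a post-processing of $\tilde{S}$ (the true branch outputs the real $x$), but LDP w.r.t.\ $S$ only constrains the $Y$-marginals conditioned on $S$, and these coincide with those of the two-stage procedure after averaging $x$ against $\recht{p}_{X|s}$. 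Your derivations of the $\alpha$-LIP claim from the Section \ref{sec:maths} lemma and of $L(\alpha)\leq\alpha$ from Proposition \ref{prop:cr} are both sound.
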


\begin{proof}
For all $y \in \mathcal{X}$ and $s \in \mathcal{S}$ we have, following equation (\ref{eq:app1}) in Appendix A, that
\begin{equation}
\mathbb{P}(\recht{CR}^{\alpha}(X,S) = y|S = s) = \tfrac{1}{\textrm{e}^{\alpha}+c-1}
\Big(\textrm{e}^{\alpha}\recht{p}_{y|s}+\sum_{s' \neq s}\recht{p}_{y|s'}\Big).
\end{equation}
It follows that
\begin{align}
&\frac{\mathbb{P}(\recht{CR}^{\alpha}(X,S) = y|S = s)}{\mathbb{P}(\recht{CR}^{\alpha}(X,S) = y|S = s')} \nonumber \\
&= \frac{\textrm{e}^{\alpha}\recht{p}_{y|s}+p_{y|s'}+\sum_{s'' \neq s,s'}\recht{p}_{y|s''}}{\recht{p}_{y|s}+\textrm{e}^{\alpha}p_{y|s'}+\sum_{s'' \neq s,s'}\recht{p}_{y|s''}} \\
&\leq \max\left\{1,\frac{\textrm{e}^{\alpha}\recht{p}_{y|s}+p_{y|s'}}{\recht{p}_{y|s}+\textrm{e}^{\alpha}p_{y|s'}}\right\} \\
&\leq \textrm{e}^{\alpha}. \qedhere
\end{align}
\end{proof}

\section{Experiments} \label{sec:exp}\vspace{-0pt}

We test the feasibility of the different methods by performing small-scale experiments on synthetic data and real-world data. All experiments are implemented in Matlab and conducted on a PC with Intel Core i7-7700HQ 2.8GHz and 32GB memory. 

\subsection{Synthetic data: LDP vs LIP}

We compare the computing time for finding optimal $\varepsilon$-LDP and $\varepsilon$-LIP protocols for $c = 2$ and $a = 5$ for 10 random distributions $\recht{p}_{S,X}$, obtained by generating each $\recht{p}_{s,x}$ uniformly from $[0,1]$ and then normalising. 
We take $\varepsilon \in \{0.5,1,1.5,2\}$; the results are in Figure \ref{fig:ldplip}. As one can see, Theorem \ref{thm:lip} gives significantly faster results than Theorem \ref{thm:ldp}; the average computing time for Theorem \ref{thm:ldp} for $\varepsilon = 0.5$ is 133s, while for Theorem \ref{thm:lip} this is 0.0206s.  With regards to the utility $\recht{I}(X;Y)$, since $\varepsilon$-LDP implies $\varepsilon$-LIP, the optimal $\varepsilon$-LIP protocol will have better utility than the optimal $\varepsilon$-LDP protocol. However, as can be seen from the figure, the difference in utility is relatively low.

Note that for bigger $\varepsilon$, both the difference in computing time and the difference in $\recht{I}(X;Y)$ between LDP and LIP become less. This is because of the probabilistic relation between $S$ and $X$, for $\varepsilon$ large enough, any sanitisation protocol satisfies $\varepsilon$-LIP and $\varepsilon$-LDP. This means that as $\varepsilon$ grows, the resulting polytopes will have fewer defining inequalities, hence they will have fewer vertices. This results in lower computation times, which affects LDP more than LIP. At the same time, the fact that every protocol is both $\varepsilon$-LIP and $\varepsilon$-LDP will result in the same optimal utility.

In Figure \ref{fig:ldplip2}, we compare optimal $\frac{\varepsilon}{2}$-LDP protocols to optimal $\varepsilon$-LIP protocols. Again, LIP is significantly faster than LDP. Since $\varepsilon$-LIP implies $\frac{\varepsilon}{2}$-LDP, the optimal $\frac{\varepsilon}{2}$-LDP has higher utility; again the difference is low.

\begin{figure}
\centering
\includegraphics[width = \linewidth]{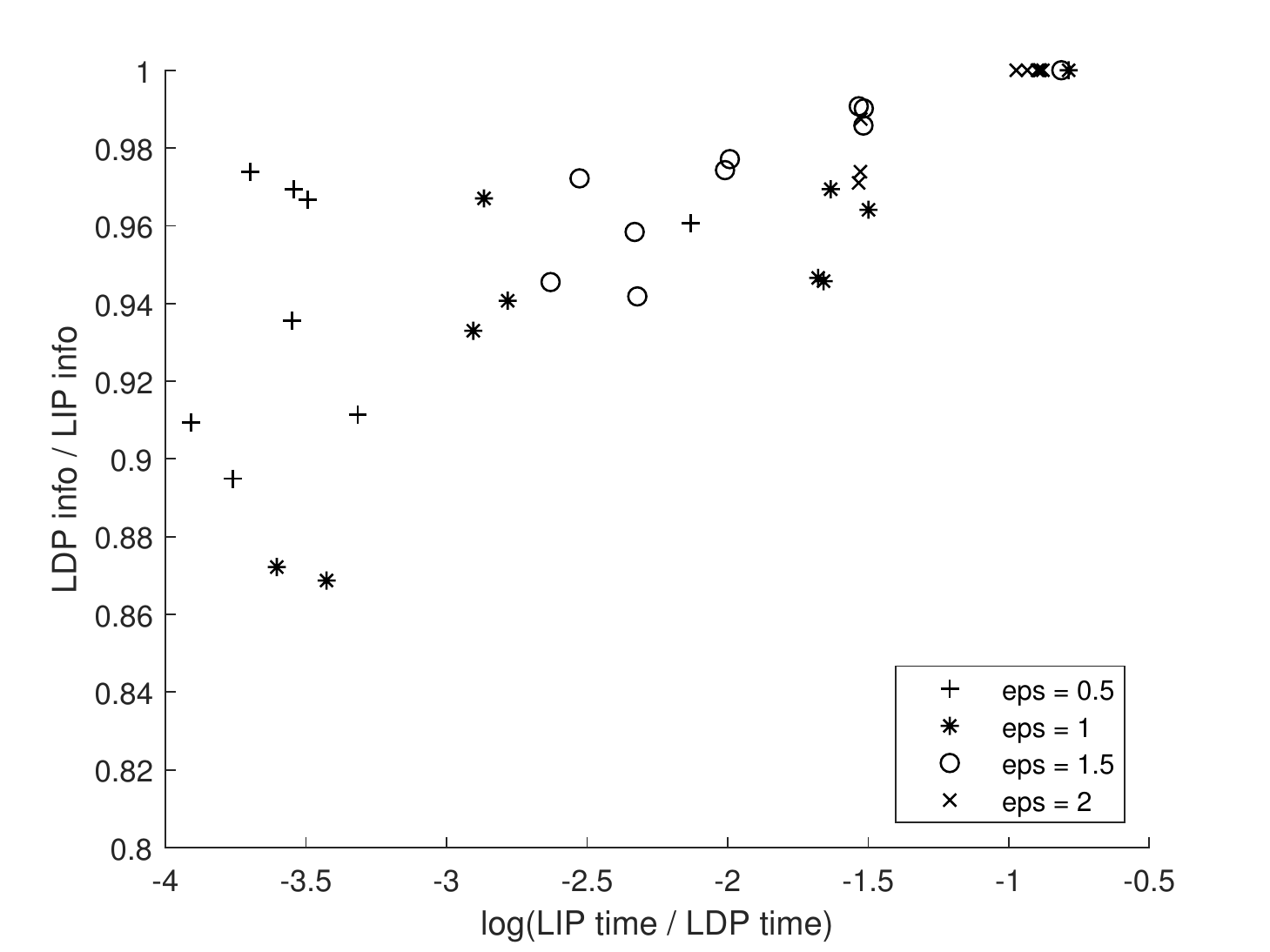}
\captionsetup{font={footnotesize,rm},justification=centering,labelsep=period}
\caption{\it Comparision of computation time and $\recht{I}(X;Y)$ for $\varepsilon$-LDP protocols found via Theorem \ref{thm:ldp} and $\varepsilon$-LIP protocols found via Theorem \ref{thm:lip}, for random $\recht{p}_{S,X}$ with $c = 2$, $a = 5$, and $\varepsilon \in \{0.5,1,1.5,2\}$.\label{fig:ldplip}}
\end{figure}

\begin{figure}
\includegraphics[width = \linewidth]{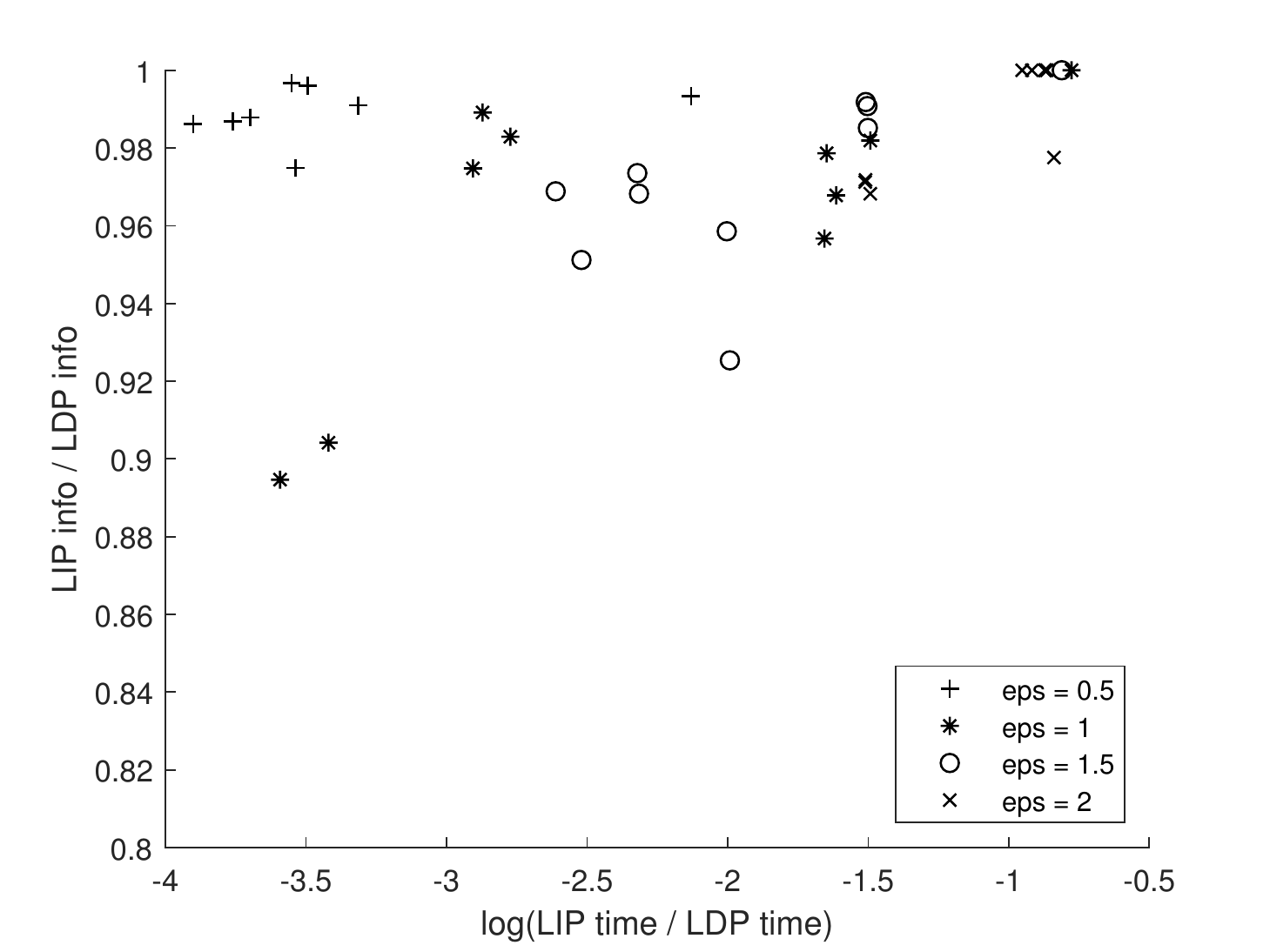}
\centering
\captionsetup{font={footnotesize,it},justification=centering,labelsep=period}
\caption{Comparision of computation time and $\recht{I}(X;Y)$ for $\varepsilon$-LDP protocols found via Theorem \ref{thm:ldp} and $\frac{\varepsilon}{2}$-LIP protocols found via Theorem \ref{thm:lip}, for random $\recht{p}_{S,X}$ with $c = 2$, $a = 5$, and $\varepsilon \in \{0.5,1,1.5,2\}$.\label{fig:ldplip2}}
\end{figure}

\subsection{Synthetic data: LIP vs SRLIP}

We also perform similar comparisons for multiple attributes, for $c = 2$, $a_1 = a_2 = 3$ and $a_3 = 4$, comparing the methods of Theorems \ref{thm:lip} and \ref{thm:budget}. The results are presented in Figure~\ref{fig:rslip}. As one can see, Theorem \ref{thm:budget} is significantly slower, with Theorem \ref{thm:lip} being on average $476$ times as fast. There is a sizable difference in utility, caused on one hand by the fact that $\varepsilon$-SRLIP is a stricter privacy requirement than $\varepsilon$-LIP, and on the other hand by the fact that Theorem \ref{thm:budget} does not give us the optimal $\varepsilon$-SRLIP protocol.

\begin{figure}
\centering
\includegraphics[width = \linewidth]{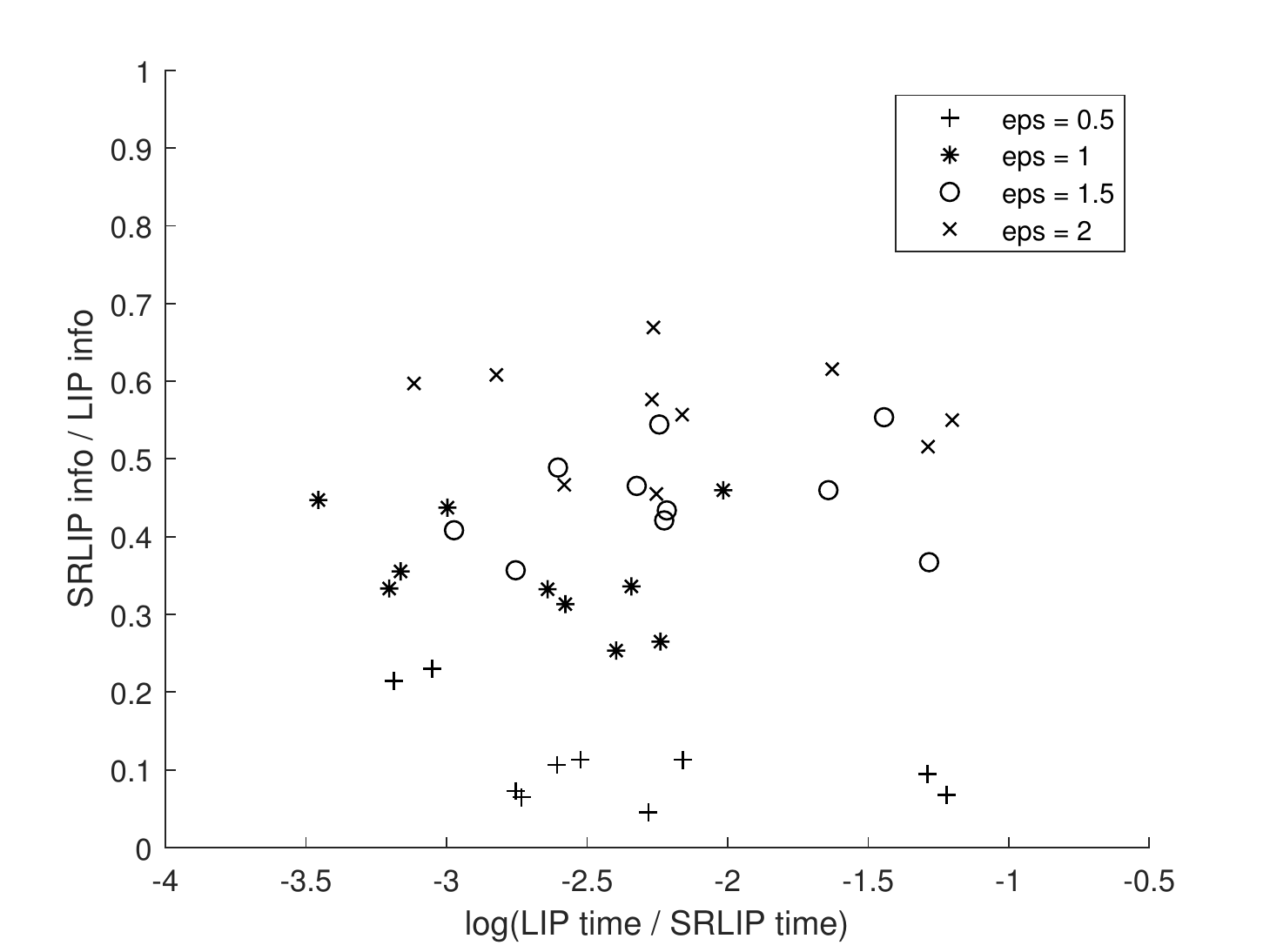}
\captionsetup{font={footnotesize,it},justification=centering,labelsep=period}
\caption{Comparison of computation time and $\recht{I}(X;Y)$ for $\varepsilon$-(SR)LIP-protocols found via Theorems \ref{thm:lip} and \ref{thm:budget}, for random $\recht{p}_{S,X}$ with $c = 2$, $a_1 = a_2 = 3$, $a_3 = 4$, and $\varepsilon \in \{0.5,1,1.5,2\}$.\label{fig:rslip}}
\end{figure}

\subsection{Adult dataset}

\begin{figure*}
    \centering
    \begin{subfigure}[b]{0.45\textwidth}
		\includegraphics[width=\textwidth]{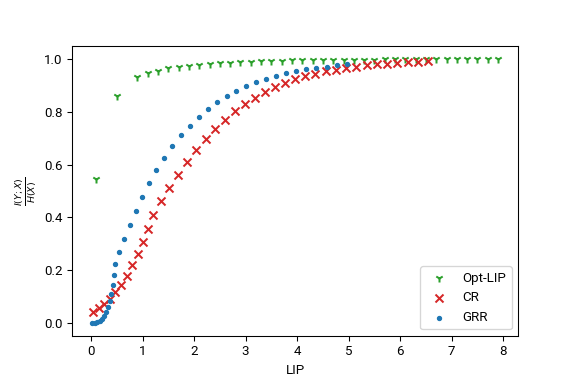}
		\vspace{-0.7cm}
		\caption{$S =$ marital status, $X =$ education}
		\label{ms_edu}
	\end{subfigure}
	\begin{subfigure}[b]{0.45\textwidth}
		\includegraphics[width=\textwidth]{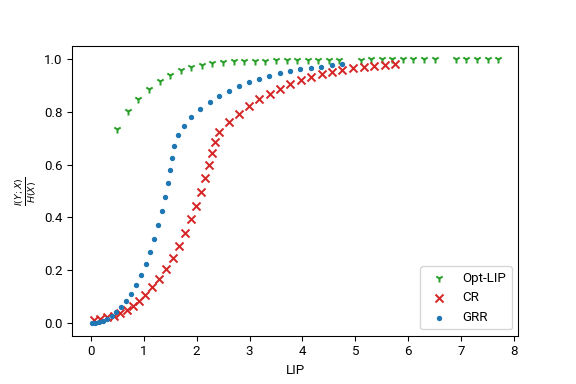}
		\vspace{-0.7cm}
		\caption{$S =$ occupation, $X =$ education}
		\label{oc_edu}
	\end{subfigure}\\
	\begin{subfigure}[b]{0.45\textwidth}
		\includegraphics[width=\textwidth]{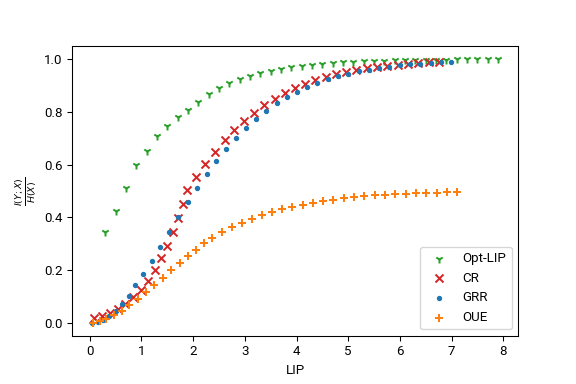}
		\vspace{-0.7cm}
		\caption{$S =$ marital status, $X =$ relationship}
		\label{ms_rel}
	\end{subfigure}
    \begin{subfigure}[b]{0.45\textwidth}
		\includegraphics[width=\textwidth]{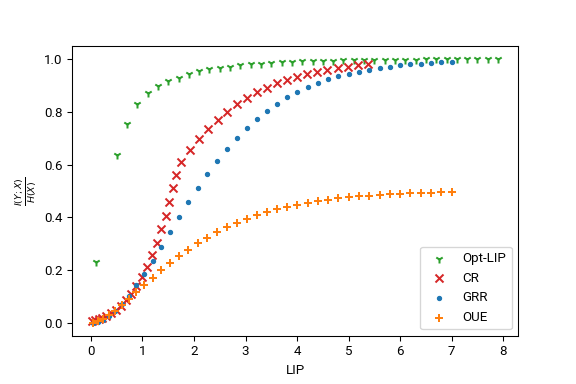}
		\vspace{-0.7cm}
		\caption{$S =$ occupation, $X =$ relationship}
		\label{oc_rel}
	\end{subfigure}\\
    \begin{subfigure}[b]{0.45\textwidth}
		\includegraphics[width=\textwidth]{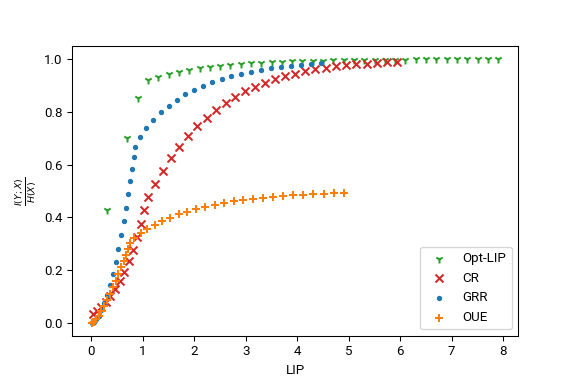}
		\vspace{-0.7cm}
		\caption{$S =$ marital status, $X =$ race}
		\label{ms_rac}
	\end{subfigure}
    \begin{subfigure}[b]{0.45\textwidth}
		\includegraphics[width=\textwidth]{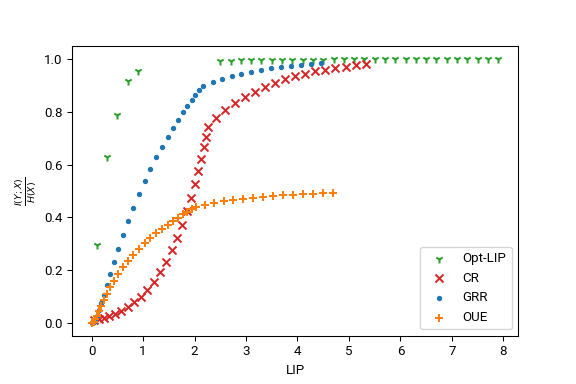}
		\vspace{-0.7cm}
		\caption{$S =$ occupation, $X =$ race}
		\label{oc_rac}
	\end{subfigure}
	\caption{\it Experiments on the Adult dataset.}
	\label{fig:adult}
\end{figure*}

We also test the utility of Conditional Reporting (CR), both on real world data and synthetic data. We consider the well-known Adult dataset \cite{Dua:2019}, which contains demographic data from the 1994 US census. 
For our tests, we take $S \in \{\text{marital status, occupation}\}$ (with $c = 7$ and $c=15$, respectively) and $X \in \{\text{education},\text{relationship},\text{sex}\}$ (with $a = 16,6,2$). 
Based on our findings in the previous sections, we take LIP as a privacy measure, and $\recht{I}(X;Y)$ as a utility measure. 
We compare CR on the one hand with the optimal method (Opt-LIP) found in Section \ref{sec:lip}, and on the other hand with the established LDP protocols GRR and OUE. The results are shown in Figure \ref{fig:adult}. For $X =$ education, the mutual information for OUE was infeasible to compute. Similarly, for $S = $ occupation, some cases of Opt-LIP failed to compute within a reasonable timeframe. Nevertheless, we can conclude that GRR and CR both perform somewhere between Opt-LIP and OUE. As the LIP value $\varepsilon$ grows larger, GRR and CR grow close to Opt-LIP. At the same time, OUE falls off for large $\varepsilon$, having $\tfrac{1}{2}\recht{H}(X)$ as its limit. This is because OUE only has probability $\tfrac{1}{2}$ transmitting the true $X$ (as element of the set $Y$). The difference between GRR and CR is less clear, and it appears to depend on the joint distribution $\recht{p}_{X,S}$ which protocol gives the best utility.

\subsection{Synthetic data: GRR vs CR}

To investigate the difference between GRR and CR, we apply both methods to synthetic data. We disregard OUE as it performs worse than the other two protocols, especially in the low privacy regime. For a fixed choice of $a$ and $c$, we draw a number of probability distributions from the Jeffreys prior on $\mathcal{S}\times\mathcal{X}$, i.e. the symmetric Dirichlet distribution with parameter $\tfrac{1}{2}$. We fix a set of LIP values $\varepsilon$, and for each  of these and each probability distribution, we solve equations (\ref{eq:lipgrr}) and (\ref{eq:lipcr}), setting the left hand side equal to $\varepsilon$ and solving for $\alpha_{\recht{GRR}}$ and $\alpha_{\recht{CR}}$. We then calculate the mutual information $\recht{I}(X;Y)$, which we normalise by dividing by $\recht{H}(X)$. The resulting averages and standard deviations are displayed in Figure \ref{fig:synth}. On the whole, we see that the larger $a$ is compared to $c$, the more utility CR provides compared to GRR. However, this does not tell the whole story, as the difference between datasets has more impact on the utility than the difference between methods.

\begin{figure*}
    \centering
    \begin{subfigure}[b]{0.45\textwidth}
		\includegraphics[width=\textwidth]{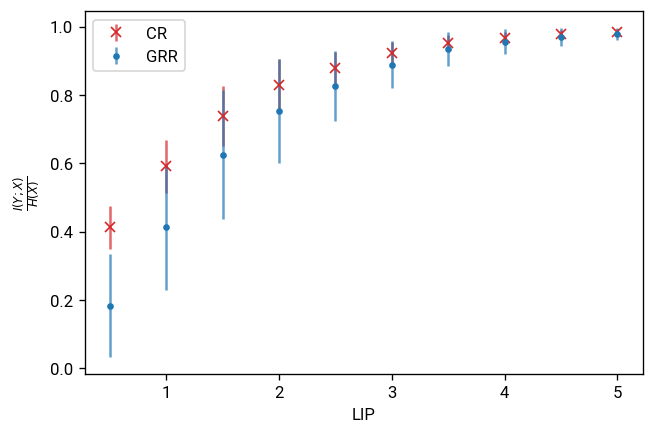}
		\vspace{-0.7cm}
		\caption{$a = 5$, $c = 2$}
		\label{a5c5}
	\end{subfigure}
	\begin{subfigure}[b]{0.45\textwidth}
		\includegraphics[width=\textwidth]{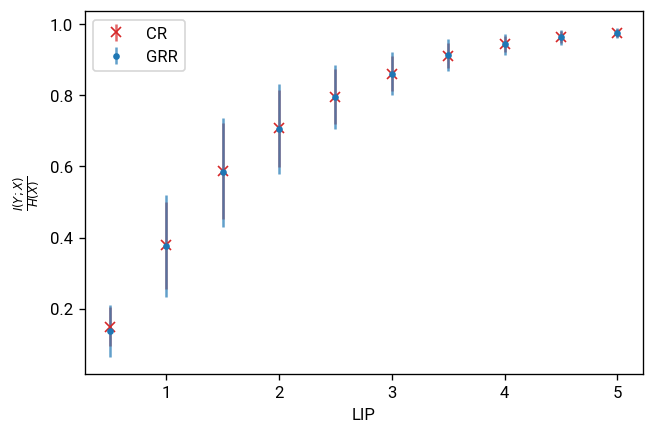}
		\vspace{-0.7cm}
		\caption{$a = 2$, $c = 5$}
		\label{a2c5}
	\end{subfigure}\\
	\begin{subfigure}[b]{0.45\textwidth}
		\includegraphics[width=\textwidth]{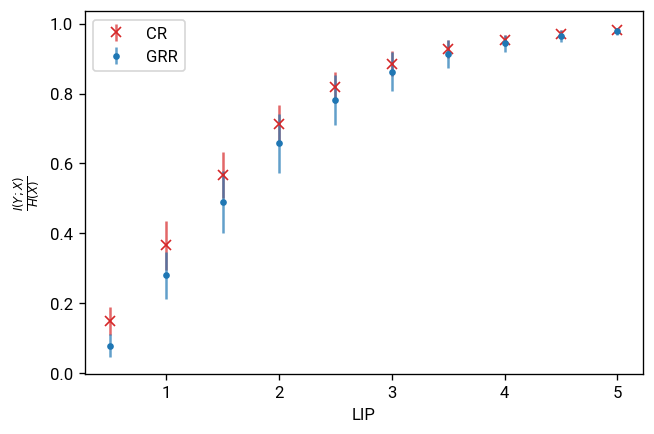}
		\vspace{-0.7cm}
		\caption{$a = 5$, $c = 5$}
		\label{a5c5}
	\end{subfigure}
	\begin{subfigure}[b]{0.45\textwidth}
		\includegraphics[width=\textwidth]{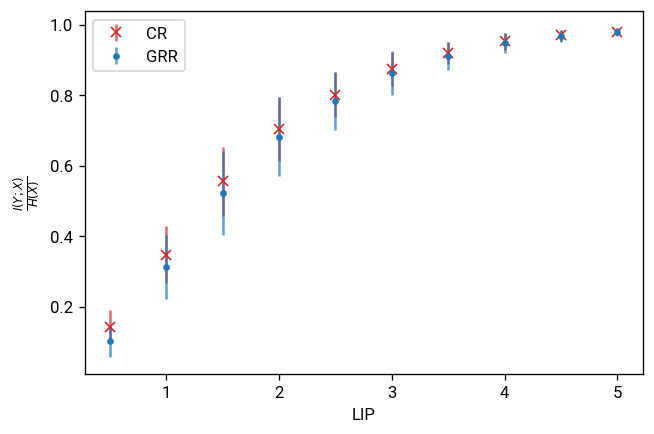}
		\vspace{-0.7cm}
		\caption{$a = 3$, $c = 5$}
		\label{a3c5}
	\end{subfigure}\\
    \begin{subfigure}[b]{0.45\textwidth}
		\includegraphics[width=\textwidth]{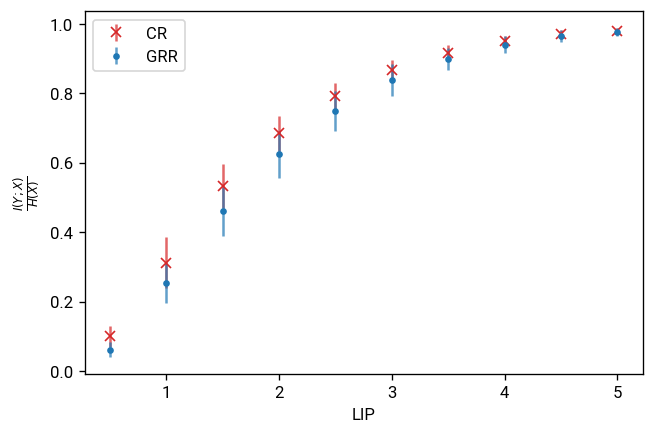}
		\vspace{-0.7cm}
		\caption{$a = 5$, $c = 7$}
		\label{a5c7}
	\end{subfigure}
    \begin{subfigure}[b]{0.45\textwidth}
		\includegraphics[width=\textwidth]{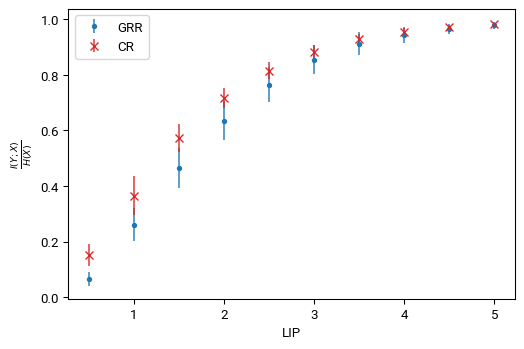}
		\vspace{-0.7cm}
		\caption{$a = 7$, $c = 5$}
		\label{a7c5}
	\end{subfigure}
	\caption{\it Experiments on synthetic data. For each value of $a$ and $c$, the average utility is taken over 100 randomly generated probability distributions. Bar size denotes standard deviation.}
	\label{fig:synth}
\end{figure*}

\subsection{GRR and CR parameter $\alpha$}

To investigate what property of the probability distribution $p_{XS}$
causes CR to outperform GRR, we consider the parameters $\alpha_{\recht{CR}}$ and $\alpha_{\recht{GRR}}$ that govern the privacy protocols CR and GRR. Both of these have the property that the higher their value, the less `random' the protocols are, resulting in a better utility. Since these $\alpha$ are found from $\varepsilon$ through different equations, the difference in utility of GRR and CR for different probability distributions may be explained by a difference in $\alpha$. We test this assertion for 100 randomly generated distributions in Figure \ref{fig:alpha}. As can be seen, the difference in mutual information can for a large part be explained by a difference in $\alpha$ ($\rho = 0.9815$, $\rho = 0.9889$, and $\rho = 0.9731$, respectively). In Figure \ref{fig:alphaeps}, we plot the relation between $\alpha$ and the LIP value $\varepsilon$ for the experiments in \ref{oc_edu} and \ref{oc_rel}. The fact that $\alpha_{\recht{GRR}} > \alpha_{\recht{CR}}$ in \ref{alpha1} corresponds to the fact that GRR outperforms CR in \ref{oc_edu}, and the opposite relation holds between \ref{alpha2} and \ref{oc_rel}.

\begin{figure}
    \centering
    \begin{subfigure}[b]{0.45\textwidth}
		\includegraphics[width=\textwidth]{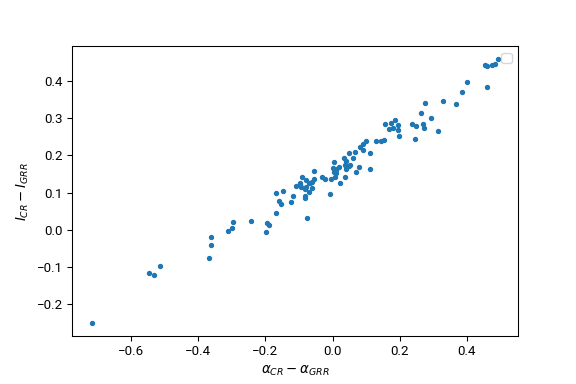}
		\vspace{-0.7cm}
		\caption{$\varepsilon = 1$}
		\label{a5c5}
	\end{subfigure}
	\begin{subfigure}[b]{0.45\textwidth}
		\includegraphics[width=\textwidth]{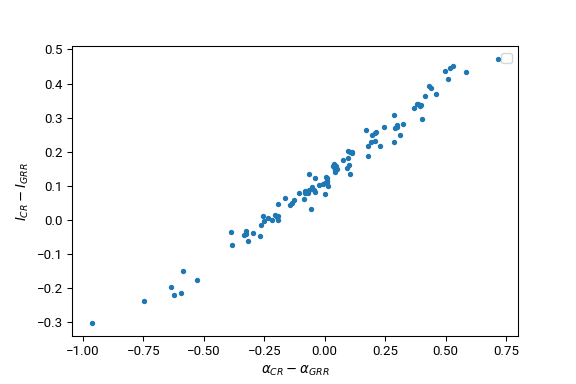}
		\vspace{-0.7cm}
		\caption{$\varepsilon = 1.5$}
		\label{a5c5}
	\end{subfigure}
    \begin{subfigure}[b]{0.45\textwidth}
		\includegraphics[width=\textwidth]{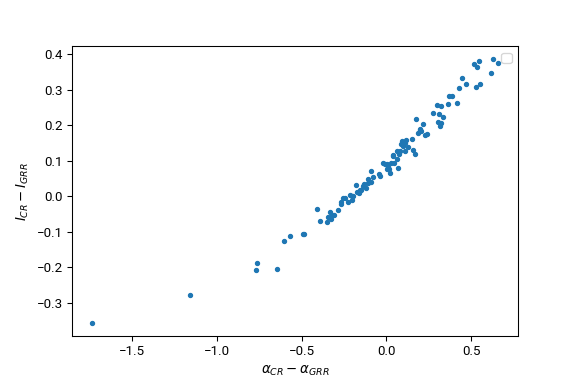}
		\vspace{-0.7cm}
		\caption{$\varepsilon = 2$}
		\label{a5c7}
	\end{subfigure}
    \caption{\it Difference in $\alpha$ versus difference in utility for $100$ randomly generated probability distributions, for $a=c=5$.}
    \label{fig:alpha}
\end{figure}

\begin{figure*}
    \centering
    \begin{subfigure}[b]{0.45\textwidth}
		\includegraphics[width=\textwidth]{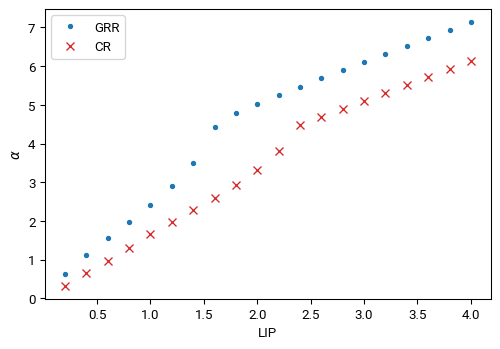}
		\vspace{-0.7cm}
		\caption{$S =$ occupation, $X =$ education}
		\label{alpha1}
	\end{subfigure}
	\begin{subfigure}[b]{0.45\textwidth}
		\includegraphics[width=\textwidth]{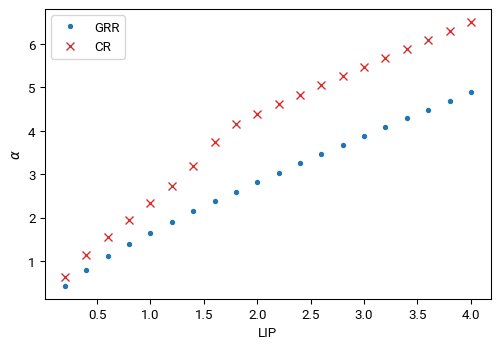}
		\vspace{-0.7cm}
		\caption{$S =$ occupation, $X =$ relationship}
		\label{alpha2}
	\end{subfigure}
    \caption{\it Value of GRR and CR parameter $\alpha$ for different values of $\varepsilon$ for the Adult dataset.}
    \label{fig:alphaeps}
\end{figure*}

Unfortunately, we were not able to relate the difference in parameter $\alpha$ to other properties of the distribution. Without presenting details we mention that the properties $\recht{I}(X;S), \max_{x,s} \recht{p}_{x,s}, \max_x \recht{p}_x$ and $\max_s \recht{p}_s$ do not appear to have an impact on the difference in utility between GRR and CR.

\section{Conclusions and future work}

Local data sanitisation protocols have the advantage of being scalable for large numbers of users. Furthermore, the advantage of using differential privacy-like privacy metrics is that they provide worst-case guarantees, ensuring that the privacy of every user is sufficiently protected. For both $\varepsilon$-LDP and $\varepsilon$-LIP we have derived methods to find optimal sanitisation protocols.  Within this setting, we have observed that $\varepsilon$-LIP has two main advantages over $\varepsilon$-LDP. First, it fits better within the privacy funnel setting, where the distribution $\recht{p}_{S,X}$ is (at least approximately) known to the estimator. Second, finding the optimal protocol is significantly faster than under LDP, especially for small $\varepsilon$. If one nevertheless prefers $\varepsilon$-LDP as a privacy metric, then it is still worthwile to find the optimal $\frac{\varepsilon}{2}$-LIP protocol, as this can be found significantly faster, at a low utility penalty.

In the multiple attributes setting, we have shown that $\varepsilon$-SRLIP provides additional privacy guarantees compared to $\varepsilon$-LIP, since without this requirement a protocol can lose all its privacy protection in the presence of side channels. Unfortunately, however, experiments show that we pay for this both in computation time and in utility.

With regard to the specific protocols, we have found that the newly introduced protocol, CR, generally outperforms OUE, especially for high values of $\varepsilon$-LIP. It behaves more or less similar to GRR, and  which of these two protocols performs best depends on properties of the joint distribution $\recht{p}_{X,S}$. In particular, it largely depends on which of the two protocols has the highest value of their governing parameter $\alpha$. Also, we have seen that CR performs better on average if $a$ is large compared to $c$.

For further research, a number of important avenues remain to be explored. First, the aggregator's knowledge about $\recht{p}_{S,X}$ may not be perfect, because they may learn about $\recht{p}_{S,X}$ through observing $(\vec{S},\vec{X})$. Incorporating this uncertainty leads to robust optimisation \cite{bertsimas2018data}, which would give stronger privacy guarantees. 

Second, it might be possible to improve the method of obtaining $\varepsilon$-SRLIP protocols via Theorem \ref{thm:budget}. Examining its proof shows that lower values of $\varepsilon^j$ may suffice to still ensure $\varepsilon$-SRLIP. Furthermore, the optimal choice of $(\varepsilon^j)_{j \leq m}$ such that $\sum_j \varepsilon^j = \varepsilon$ might not be $\varepsilon^j = \frac{\varepsilon}{m}$. However, it is computationally prohibitive to perform the vertex enumeration for many different choices of $(\varepsilon^j)_{j \leq m}$, and as such a new theoretical approach is needed to determine the optimal $(\varepsilon^j)_{j \leq m}$ from $\varepsilon$ and $\recht{p}_{S,X}$.

Third, it would be interesting to see if there are other ways to close the gap between the theoretically optimal protocol, which may be hard to compute in practice, and general LDP protocols, which do not see the difference between sensitive and non-sensitive information. This is relevant because CR needs both $S$ and $X$ as input, and there may be situations where access to $S$ is not available.

Although CR outperforms GRR and OUE for some datasets, it does not do so consistently. More research in the properties of distributions where CR fails to provide a significant advantage might lead to improved privacy protocols.

\section*{Acknowledgements}
This work was supported by NWO grant 628.001.026 (Dutch Research Council, the Hague, the Netherlands).

\printbibliography

\appendices

\section{Proofs} \label{app:proof}

\begin{proof}[Proof of Theorem \ref{thm:budget}]
For $J \subseteq\{1,\ldots, m\}$ and $j \in \{1,\ldots,m\}$, we write $J[j] := J \cap \{1,\ldots,j-1\}$. Furthermore, we write $\mathcal{X}^{\backslash J} = \prod_{j \notin J} \mathcal{X}^j$, and its elements as $x^{\backslash J}$. We write $\varepsilon := \sum_j \varepsilon^j$. We then have
\begin{align}
\recht{p}_{y|s,x^J} &= \sum_{x^{\backslash J}} \recht{p}_{y|x}\recht{p}_{x^{\backslash J}|s,x^J} \\
&= \recht{p}_{y^J|x^J}\sum_{x^{\backslash j}} \left(\prod_{j \notin J} \recht{p}_{y^j|x^j}\right)\recht{p}_{x^{\backslash J}|s,x^J} \\
&= \recht{p}_{y^J|x^J}\sum_{x^{\backslash j}} \prod_{j \notin J} \recht{p}_{y^j|x^j}\recht{p}_{x^j|s,x^{J[j]}} \\
&= \recht{p}_{y^J|x^J}\prod_{j \notin J} \sum_{x^j} \recht{p}_{y^j|x^j}\recht{p}_{x^j|s,x^{J[j]}} \\
&= \recht{p}_{y^J|x^J}\prod_{j \notin J}\recht{p}_{y^j|s,x^{J[j]}} \\
&\leq \recht{p}_{y^J|x^J}\prod_{j \notin J}\textrm{e}^{\varepsilon^j}\recht{p}_{y^j|x^{J[j]}} \\
&\leq \textrm{e}^{\varepsilon}\recht{p}_{y^J|x^J}\prod_{j \notin J}\recht{p}_{y^j|x^{J[j]}} \\
&= \textrm{e}^{\varepsilon} \recht{p}_{y|x^J}.
\end{align}
The fact that $\textrm{e}^{-\varepsilon} \recht{p}_{y|x^J} \leq \recht{p}_{y|s,x^J}$ is proven analogously.
\end{proof}

\begin{proof}[Proof of Proposition \ref{prop:cr}]
Write $Q_{y|x,s} = \mathbb{P}(\recht{CR}^{\alpha}(x,s) = y)$. Then
\begin{align}
Q_{y|x,s} &= \sum_{s'} \mathbb{P}(\recht{CR}^{\alpha}(x,s) = y | \tilde{s} = s')\mathbb{P}(\tilde{s} = s'|S = s) \\
&=\frac{\textrm{e}^{\alpha}}{\textrm{e}^{\alpha}+c-1}+\frac{1}{\textrm{e}^{\alpha}+c-1}\sum_{s' \neq s}\recht{p}_{y|s'},
\end{align}
where $\delta_{x=y}$ is the Kronecker delta. It follows that
\begin{align}
&\mathbb{P}(\recht{CR}^{\alpha}(X,S) = y|S = s) \nonumber \\
&= \sum_{x} Q_{y|x,s}\recht{p}_{x|s} \\
&= \frac{\textrm{e}^{\alpha}}{\textrm{e}^{\alpha}+c-1}\recht{p}_{y|s}+\frac{1}{\textrm{e}^{\alpha}+c-1}\sum_{s' \neq s}\recht{p}_{y|s'} \label{eq:app1}\\
&= \frac{\textrm{e}^{\alpha}-1}{\textrm{e}^{\alpha}+c-1}\recht{p}_{y|s}+\frac{1}{\textrm{e}^{\alpha}+c-1}\sum_{s'}\recht{p}_{y|s'}, \\
&\mathbb{P}(\recht{CR}^{\alpha}(X,S) = y) \nonumber \\
&= \sum_{s} \mathbb{P}(\recht{CR}^{\alpha}(X,S) = y|S = s)\recht{p}_{s} \\
&= \frac{\textrm{e}^{\alpha}}{\textrm{e}^{\alpha}+c-1}\recht{p}_{y}+\frac{1}{\textrm{e}^{\alpha}+c-1}\sum_s\sum_{s' \neq s}\recht{p}_{y|s'}\recht{p}_s\\
&= \frac{\textrm{e}^{\alpha}}{\textrm{e}^{\alpha}+c-1}\recht{p}_{y}+\frac{1}{\textrm{e}^{\alpha}+c-1}\sum_{s'}\recht{p}_{y|s'}\sum_{s \neq s'}\recht{p}_s\\
&= \frac{\textrm{e}^{\alpha}}{\textrm{e}^{\alpha}+c-1}\recht{p}_{y}+\frac{1}{\textrm{e}^{\alpha}+c-1}\sum_{s'}(\recht{p}_{y|s'}-\recht{p}_{y,s'})\\
&=\frac{\textrm{e}^{\alpha}-1}{\textrm{e}^{\alpha}+c-1}\recht{p}_{y}+\frac{1}{\textrm{e}^{\alpha}+c-1}\sum_{s'}\recht{p}_{y|s'}.
\end{align}
We find that
\begin{equation}
L(\alpha) = \max_{y,s}\left|\ln\frac{\mathbb{P}(\recht{CR}^{\alpha}(X,S) = y|S = s)}{\mathbb{P}(\recht{CR}^{\alpha}(X,S) = y)}\right|,
\end{equation}
hence $\recht{CR}^{\alpha}$ satisfies $\varepsilon$-LIP if and only if $\varepsilon \geq L(\alpha)$.
\end{proof}
\end{document}